\renewcommand{\l}{\ell}
\newtheorem{prop}{Proposition}[section]
\newtheorem{theo}[prop]{Theorem}
\newtheorem{lemm}[prop]{Lemma}
\theoremstyle{definition}
\newtheorem{defi}[prop]{Definition}
\newtheorem{exam}[prop]{Example}
\newcommand{\CC}{\mathbb{C}}
\newcommand{\PP}{\mathbb{P}}
\newcommand{\Aa}{\mathcal{A}}
\newcommand{\Bb}{\mathcal{B}}
\newcommand{\Jj}{\mathcal{J}}
\newcommand{\Oo}{\mathcal{O}}
\renewcommand{\l}{\ell}
\newcommand{\lra}{\longrightarrow}
\newcommand{\ov}[1]{\overline{#1}}
\begin{document}
 \title {Characterization of quantum entanglement\\
 via a hypercube of Segre embeddings
 }

 \author{Joana Cirici}
 \email{jcirici@ub.edu}
\address{Departament de Matemàtiques i Informàtica\\ Universitat de Barcelona\\
Gran Via 585, 
08007 Barcelona}

\author{Jordi Salvad\'{o}}
\email{jsalvado@icc.ub.edu}
\author{Josep Taron}
\email{taron@fqa.ub.edu}
\address{Departament de Fis\'ica Qu\`antica i Astrof\'isica and Institut de Ci\`encies del Cosmos\\ Universitat de Barcelona
\\Mart\'i i Franqu\`es 1,
  08028 Barcelona}

\begin{abstract}
A particularly simple description of separability of quantum states
arises naturally in the setting of complex algebraic geometry, via the
Segre embedding. This is a map describing how to take products of
projective Hilbert spaces. In this paper, we show that for pure states
of $n$ particles, the corresponding Segre embedding may be described
by means of a directed hypercube of dimension $(n-1)$, where all edges
are bipartite-type Segre maps. Moreover, we describe the image of the
original Segre map via the intersections of images of the $(n-1)$
edges whose target is the last vertex of the hypercube. This purely
algebraic result is then transferred to physics. For each of the last
edges of the Segre hypercube, we introduce an observable
which measures geometric
separability and is related to the trace of the squared reduced
density matrix. As a consequence, the hypercube approach gives a novel
viewpoint on measuring entanglement, naturally relating bipartitions
with $q$-partitions for any $q\geq 1$.  
We test our observables against well-known states, showing that these provide well-behaved and fine measures of entanglement.
\end{abstract}

\maketitle
\setcounter{tocdepth}{1}
%

\section{Introduction}
Quantum entanglement is at the heart of quantum physics,
with crucial roles in quantum information theory, superdense coding
and quantum teleportation among others.

An important problem in entanglement theory is to obtain separability criteria. While there is a clear definition of separability, in general it is difficult to determine whether a given state is entangled or separable. 
A refinement of this problem is to quantify entanglement on a given entangled state. This is a broadly open problem, in the sense that there is not a unique established way of measuring entanglement, which might depend on the initial set-up and applications on has in mind.
There is, however, a general consensus on the desirable properties of a good entanglement measure \cite{1997PhRvL..78.2275V,2009PhR...474....1G}.
Directly attached to measuring entanglement is the notion of \textit{maximally entangled state}, central in teleportation protocols.

Many different entanglement measures and the corresponding notions of maximal entanglement have been proposed. Methods range from using Bell inequalities, looking at inequalities in the larger scheme of entanglement witnesses, spin squeezing inequalities, entropic inequalities, the measurement of nonlinear properties of the quantum state or the approximation of positive maps.
We refer to the exhaustive reviews
\cite{Horodecki:2009zz, 2009PhR...474....1G, Plenio} for the basic aspects of entanglement including its history, characterization, measurement, classification and applications.

A particularly simple description of separability of quantum states arises naturally in the setting of complex algebraic geometry.
In this setting, pure multiparticle states are identified with points in the complex projective space $\PP^N$, the set of lines of the complex space $\CC^{N+1}$ that go through the origin.
Entanglement is then understood via the categorical product of projective spaces: the \textit{Segre embedding}. This is a map of complex algebraic varieties
\begin{equation}\label{GenSeg}
\PP^1\times\stackrel{(n)}{\cdots}\times\PP^1\lra \PP^{2^n-1},\end{equation}
whose image, called the \textit{Segre variety}, is described in terms of a family of homogeneous quadratic polynomial equations in $2^n$ variables, where $n$ is the number of particles. The points of the Segre variety correspond precisely to separable states (see for instance \cite{Ballico,Bengtsson}).

In this paper, we exploit this geometric viewpoint to show that the image 
of (\ref{GenSeg})
is in fact given by the intersection of all Segre varieties defined via bipartite-type Segre maps
\begin{equation}\label{BipSeg}
\PP^{2^\ell-1}\times\PP^{2^{n-\ell}-1}\lra \PP^{2^n-1},
\text{ for all }1\leq \ell\leq n-1.
\end{equation}
Specifically, we show that a state is $q$-partite if and only if it
lies in $q$ of the images of Segre maps of the form (\ref{BipSeg}). We
do this after showing that the Segre embedding (\ref{GenSeg}) may be
decomposed in various equivalent ways, leading to a hypercube of
dimension $(n-1)$ whose edges are bipartite-type Segre maps. In this framework all the information about the separability of the state in contained
in the last applications (\ref{BipSeg}) of the hypercube. 
There are numerous approaches to entanglement via Segre varieties that are related to the present work \cite{BHL,BBZ,Heydari3,Heydari2,Heydari,Grabowski,Sanz}. The hypercube viewpoint presented here is a novel approach that connects the notions of bipartite and $q$-partite in a geometric way.

While the above results are extremely precise and intuitive, they are purely algebraic. 
In order to build a bridge from geometry to physics, we introduce a
family of observables $\{\Jj_{n,\ell}\}$, with $1\leq\ell\leq n-1$,
which allow us to detect when a given $n$-particle state belongs to
each of the images of the bipartite-type Segre maps (\ref{BipSeg}). As
a consequence, we obtain that a state  is $q$-partite if and only if
at least $q$ of the observables $\Jj_{n,\ell}$ vanish on this state, completely identifying the sub-partitions of the system.
Each of these observables are related to the trace of the squared reduced
density matrix, also known as the bipartite non-extensive Tsallis entropy with entropic index two \cite{2006JMP....47b3502H,2004CMT....16..223T}.
They are always positive on entangled states and our first applications indicate that they provide well-behaved and fine measures of entanglement.

\medskip 

We briefly explain the contents of this paper. We begin with a warm-up Section \ref{Sec2part} where we detail the theory and results for the well-understood settings of two- and three-particle states.
In Section \ref{SecGeom} we develop the geometric aspects of the paper. In particular, we describe the Segre hypercube and prove Theorem \ref{entanglement_Segre} on geometric decomposability. The main result of Section \ref{SecOps} is Theorem \ref{teomeasure}, where we match geometric decomposability with our family of observables related to non-extensive entropic measures. The two theorems are combined in Section \ref{Secphysical}, where we study entanglement measures for pure multiparticle states of spin-${1\over 2}$ and 
apply our observables on various well-known multiparticle states.

\begin{acknowledgments} We would like to thank Joan Carles Naranjo for his ideas in the proof of Lemma \ref{pullback}.\\

J. Cirici would like to acknowledge partial support from the AEI/FEDER, UE (MTM2016-76453-C2-2-P) and the Serra H\'{u}nter Program.
J. Salvad\'{o} and J. Taron are partially supported by the Spanish grants
FPA2016-76005-C2-1-PEU, PID2019-105614GB-C21, PID2019-108122GB-C32, by the Maria de Maeztu grant MDM-2014-0367 of
ICCUB, and by the European INT projects FP10ITN ELUSIVES (H2020-MSCA-ITN-2015-674896) and
INVISIBLES-PLUS (H2020-MSCA-RISE-2015-690575).
\end{acknowledgments}

\section{Warm-up: two and three particle states}\label{Sec2part}
We begin with the well-understood example of two particle entanglement.
The initial set-up consists in two particles which can be shared between two
different observers, $A$ and $B$, that can perform quantum measures to each of the particles. 
A general pure state for two spin-$\frac{1}{2}$ particles can be written as
\begin{equation}\label{psiAB}
\ket{\psi}_{AB} = z_{0}\ket{00} + z_{1}\ket{01} +z_{2}\ket{10}+ z_{3}\ket{11},
\end{equation}
where $z_{i}$ are complex numbers that satisfy the normalization condition
$\sum|z_{i}|^2=1$. The labels $AB$, which will be often omitted, indicate that $A$ is acting on
the first particle while $B$ is acting on the second, so 
\[\ket{ij}:=\ket{i}_A\otimes\ket{j}_B,\,\text{ for }i,j\in\{0,1\}.\]

Entanglement is a property that can be inferred from statistical
properties of different measurements by the observers of the
system. In particular, we can compute the sum of the expected values of
$A$ measuring the spin of the state $\ket{\psi}$ in the
different directions. With this idea in mind, we define:
\[\Jj_{A\otimes B}(\psi):=2-\left(\sum_{i=0}^3 |\bra{\psi}\sigma_i\otimes \mathbb{I}_2\ket{\psi}|^2\right),\]
where $\sigma_i$, for $i=1,2,3$, denote the Pauli matrices, $\sigma_0=\mathbb{I}_2$ is the identity matrix of size two and $\otimes$ denotes the Kronecker product. 
Using the expression (\ref{psiAB}) for $\ket{\psi}$ we obtain
\[\Jj_{A\otimes B}(\psi)=4|z_{0} z_{3} - z_{1} z_{2}|^2.\]
It is well-known that the state $\ket{\psi}$ is entangled if and only if 
\[z_0z_3-z_1z_2=0.\]
Therefore we find that $\ket{\psi}$ is a product state if and only if $\Jj_{A\otimes B}(\psi)=0$.
When $\Jj_{A\otimes B}(\psi)>0$ we have an entangled state, which is considered to be maximally entangled when the observable reaches its maximum value at 
$\Jj_{A\otimes B}(\psi)=1$. 

The measure given by $\Jj_{A\otimes B}(\psi)$ may also be interpreted
in terms of the density matrix operator. Indeed, the density matrix $\rho$ for a
pure state $\ket{\psi}$ is given by
\[\rho = \ket{\psi}\bra{\psi} =
\left( \begin{array}{cccc}
  z_0\ov z_0 &  z_0\ov z_1 & z_0\ov z_2 & z_0\ov z_3\\
  z_1\ov z_0 &  z_1\ov z_1 & z_1\ov z_2 & z_1\ov z_3\\
  z_2\ov z_0 &  z_2\ov z_1 & z_2\ov z_2 & z_2\ov z_3\\
  z_3\ov z_0 &  z_3\ov z_1 & z_3\ov z_2 & z_3\ov z_3 \end{array} \right).\]
  Compute the reduced matrix
for the subsystem $A$ 
by means of the
partial trace on $B$, 
\[\rho_A = {\rm Tr}_B \rho =
\left( \begin{array}{cc}
  z_0\ov z_0 + z_1\ov z_1 & z_0\ov z_2 + z_1\ov z_3\\
  z_2\ov z_0 + z_3\ov z_1 & z_2\ov z_2 + z_3\ov z_3 \end{array} \right).\]
The trace in the subsystem $A$ of the square reduced matrix
gives
\[{\rm Tr} \rho_A^2 = 1-2 |z_{0} z_{3} - z_{1} z_{2}|^2 .\]
This magnitude is refereed in the literature as the Tsallis entropy or $q$-entropy
with $q=2$ and is directly related with the proposed measure by,
\[{\cal J}_{A \otimes B }(\psi)=2 \left( 1 - {\rm Tr_A} \rho_A^2 \right).\]

Recall that by the Schmidt Decomposition, and after choosing a convenient basis, any pure 2-particle state may be written as 
\[\ket{\psi}=\cos(\theta)\ket{00}+\sin(\theta)\ket{11}\]
where $\theta\in[0,\pi/4]$ is the \textit{Schmidt angle}, which is known to quantify entanglement (see for instance \cite{Peres}).
A simple computation gives
\[\Jj_{A\otimes B}(\psi)=4(\cos^2(\theta)\sin^2(\theta)).\]
The two well-known states 
\[\ket{\mathrm{Sep}}=\ket{00}\text{ and }\ket{\mathrm{EPS}}={1\over \sqrt{2}}(\ket{00}+\ket{11}),\]
which can be taken as representatives for the two only classes of states under the action of Stochastic Local Operations and Classical Communication (SLOCC),
correspond to the two extreme cases of separated ($\theta=0$) and maximally entangled ($\theta=\pi/4$) states respectively,
in the sense that 
$\ket{\mathrm{EPS}}$ gives the greatest violation of Bell inequalities, has the largest entropy of entanglement,
and its one-party reduced states are both maximally mixed.
\medskip

The characterization of entanglement has a simple geometric interpretation.
Note first that a general pure state for a single spin-$\frac{1}{2}$ particle may be written as 
\[z_{0}\ket{0} + z_{1}\ket{1}\text{ where }|z_0|^2+|z_1|^2=1.\] 
In particular, such a state is determined by the pair of complex numbers $(z_0,z_1)$ and the normalization condition ensures $(z_0,z_1)\neq(0,0)$. This allows one to consider the corresponding equivalence class $[z_0:z_1]$ in the complex projective line $\PP^1$. 
This space is defined as the set of lines of $\CC^2$ that go through the origin.
The class $[z_0:z_1]$ denotes the the set of all points $(z_0',z_1')\in\CC^2\setminus\{(0,0)\}$ such that there is a non-zero complex number $\lambda$ with $(z_0,z_1)=\lambda(z_0',z_1')$.
Note that, by construction, we have $[z_0:z_1]=[\lambda z_0:\lambda z_1]$ for all $\lambda\in\CC^*$.

In summary, every one particle state of spin-$\frac{1}{2}$ 
defines a unique point in $\PP^1$. Conversely, given a point $[z_0:z_1]\in \PP^1$ we may choose a representative $(z_0,z_1)$ such that $|z_0|^2+|z_1|^2=1$ and so it determines a unique pure state 
$z_0\ket{0}+z_1\ket{1}$
up to a global phase which does not affect any state measurements.

This one-to-one correspondence between pure states and points in the projective space generalizes analogously to several particles: pure states for $n$ particles of spin-$\frac{1}{2}$
correspond to points in $\PP^{2^n-1}$. This correspondence is just a way of describing the projectivization of the Hilbert space of quantum states and so is  valid for particles of arbitrary spin, after adjusting dimensions accordingly.

For our two-particle case, since the
state $\ket{\psi}$ introduced in (\ref{psiAB}) is determined by the normalized set of complex numbers $(z_{0},
z_{1},z_{2},z_{3})$ we obtain a point $[\psi]:=[z_{0}:z_{1}:z_{2}:z_{3}]$ in the complex projective space $\PP^3$, the set of lines in $\CC^4$ going through the origin. Interestingly for the study of entanglement, there is a map 
\[f_{A\otimes B}:\PP^1_A\times \PP^1_B\lra\PP^3_{AB},\]
called the \textit{Segre embedding}, defined by the products of coordinates
\[[a_0:a_1],[b_0:b_1]\mapsto [a_0b_0:a_0b_1:a_1b_0:a_1b_1].\]
The Segre map is the categorical product of projective spaces, describing how to take products on projective Hilbert spaces. The word \textit{embedding} accounts for the fact that this map is injective: it embeds the product $\PP^1\times \PP^1$, which has complex dimension 2, inside $\PP^3$, which has complex dimension 3.
The image of this map 
\[\Sigma_{A\otimes B}:=\mathrm{Im}(f_{A\otimes B})\]
is called the \textit{Segre variety}. This is a complex algebraic variety of dimension $2$ 
and is given by the set of points $[z_{0}:z_{1}:z_{2}:z_{3}]\in \PP^3$
satisfying the single quadratic polynomial equation
\[z_{0} z_{3}-z_{1} z_{2}=0.\]
In particular, product states correspond precisely to points in the Segre variety
and 
\[\Jj_{A\otimes B}(\psi)=0\Longleftrightarrow [\psi]\in \Sigma_{A\otimes B}.\]
As a consequence, $\ket{\psi}_{AB}$ is a product state if and only if its
corresponding point $[\psi]$ in $\PP^3$ lies in the Segre variety $\Sigma_{A\otimes B}$.

\medskip 

The entanglement of three particle states is also well-understood in the literature \cite{2000PhRvA..62f2314D}.
However, this case already exhibits some non-trivial facts that arise in the geometric interpretation of entanglement. We briefly review this case before discussing the general set-up.

A general pure state for three spin-$\frac{1}{2}$ particles can be written as
\begin{equation}\label{psi3}
\def\arraystretch{1.6}
\begin{array}{ll}
\ket{\psi}_{ABC}& = z_{0}\ket{000} + z_{1}\ket{001} +z_{2}\ket{010}+
z_{3}\ket{011}+\\
& +z_{4}\ket{100}+ z_{5}\ket{101}+ z_{6}\ket{110}+ z_{7}\ket{111}
\end{array}.
\end{equation}
As in the two particle case, $z_{i}$ are complex numbers that satisfy the normalization condition
$\sum|z_{i}|^2=1$. Now we have a third observer, $C$, in addition to $A$ and $B$, so
\[\ket{ijk}:=\ket{i}_A\otimes\ket{j}_B\otimes \ket{k}_C\text{ for }i,j,k\in\{0,1\}.\] 

We will first measure bipartitions: the separability of this state into states of the form 
\[\ket{\varphi}_A\otimes\ket{\varphi'}_{BC}\text{ or }\ket{\varphi}_{AB}\otimes\ket{\varphi'}_{C}.\]
For the first case, we define the observable
\begin{align*}
\Jj_{A\otimes BC}(\psi):=2-\left(\sum_{i=0}^3 |\bra{\psi}\sigma_i\otimes \mathbb{I}_4\ket{\psi}|^2\right)=\\
=4\left\{
|z_0z_5-z_1z_4|^2+|z_0z_6-z_2z_4|^2+|z_0z_7-z_3z_4|^2+\right.\\
\left.+|z_1z_6-z_2z_5|^2+|z_1z_7-z_3z_5|^2+|z_2z_7-z_3z_6|^2
\right\},\,\end{align*}
where the last equality will be proven in Section \ref{SecOps}.
Let us for now interpret this expression geometrically. Our state $\ket{\psi}$ corresponds to a point in $\PP^7$ and bipartitions of type $A\otimes BC$ are geometrically characterized by the Segre embedding
\[f_{A\otimes BC}:\PP^1_{A}\otimes\PP^3_{BC}\lra \PP^7_{ABC}\]
defined by sending the tuples $[a_0:a_1]$, $[b_0:b_1:b_2:b_3]$ to the point of $\PP^7$ given by 
\[[a_0b_0:a_0b_1:a_0b_2:a_0b_3:a_1b_0:a_1b_1:a_1b_2:a_1b_3].\]
Specifically, the state $\ket{\psi}$ can be written as $\ket{\varphi}_A\otimes\ket{\varphi'}_{BC}$ if and only if the corresponding point 
$[\psi]=[z_0:\cdots:z_7]\in \PP^7$ lies in the Segre variety 
\[\Sigma_{A\otimes BC}:=\mathrm{Im}(f_{A\otimes BC}).\]
The equations defining $\Sigma_{A\otimes BC}$, having set coordinates $[z_0:\cdots:z_7]$ of $\PP^n$, are given by the vanishing of all $2\times 2$ minors of the matrix
\[
\left(
\begin{matrix}
z_0&z_1&z_2&z_3\\
z_4&z_5&z_6&z_7
\end{matrix}
\right).
\]
Therefore we see that 
\[\Jj_{A\otimes BC}(\psi)=0 \Longleftrightarrow [\psi]\in \Sigma_{A\otimes BC}.\]

For the second case, we define:
\begin{align*}
\Jj_{AB\otimes C}(\psi):=2-\left({1\over 2}\sum_{i,j=0}^3 |\bra{\psi}\sigma_i\otimes \sigma_j\otimes I_2\ket{\psi}|^2\right)=\\
=4\left\{
|z_0z_3-z_1z_2|^2+|z_0z_5-z_1z_4|^2+|z_0z_7-z_1z_6|^2+\right.\\
\left.+|z_2z_5-z_3z_4|^2+|z_2z_7-z_3z_6|^2+|z_4z_7-z_5z_6|^2
\right\},\,\end{align*}
where again, the last equality is detailed in Section \ref{SecOps}.
Bipartitions of type $AB\otimes C$ are now geometrically characterized by the Segre embedding
\[f_{AB\otimes C}:\PP^3_{AB}\otimes\PP^1_{C}\lra \PP^7_{ABC}.\]
The state $\ket{\psi}$ can be written as $\ket{\varphi}_{AB}\otimes\ket{\varphi'}_{C}$ if and only if the corresponding point 
$[\psi]=[z_0:\cdots:z_7]\in \PP^7$ lies in the Segre variety 
\[\Sigma_{AB\otimes C}:=\mathrm{Im}(f_{AB\otimes C}).\]
In this case, the Segre variety $\Sigma_{AB\otimes C}$ is given by all points $[z_0:\cdots:z_7]\in\PP^7$ such that
all $2\times 2$ minors of the matrix
\[
\left(
\begin{matrix}
z_0&z_1\\
z_2&z_3\\
z_4&z_5\\
z_6&z_7
\end{matrix}
\right)
\]
vanish.
Therefore we may conclude that 
\[\Jj_{AB\otimes C}(\psi)=0 \Longleftrightarrow [\psi]\in \Sigma_{AB\otimes C}.\]

We may now ask about separability of the state $\ket{\psi}$ in a totally decomposed form
 \[\ket{\varphi}_A\otimes \ket{\varphi'}_B\otimes \ket{\varphi''}_C.\]
It turns out that the above defined observables are sufficient in order to address this question.
This is easily seen using the geometric characterization of entanglement, as we next explain.
The total separability of the state $\ket{\psi}_{ABC}$ is geometrically characterized by the generalized Segre embedding 
\[f_{A\otimes B\otimes C}:\PP^1_A\times \PP^1_B\times\PP^1_C\lra \PP^7_{ABC}\]
defined by sending the tuples $[a_0:a_1],[b_0:b_1],[c_0:c_1]$
to the point in $\PP^7$ given by
\[[a_0b_0c_0:a_0b_0c_1:a_0b_1c_0:a_0b_1c_1:a_1b_0c_0:a_1b_0c_1:a_1b_1c_0:a_1b_1c_1].\]
The state $\ket{\psi}$ is separable as $A\otimes B\otimes C$ if and only if its associated point $[\psi]\in\PP^7$ lies in the generalized Segre variety given by 
\[\Sigma_{A\otimes B\otimes C}:=\mathrm{Im}(f_{A\otimes B\otimes C}).\]
The above generalized Segre embedding factors in two equivalent ways:
\[f_{A\otimes B\otimes C}=f_{A\otimes BC}\circ (\mathbb{I}_A\times f_{B+C})= f_{AB\otimes C}\circ(f_{A\otimes B}\times \mathbb{I}_C),\]
so we have a commutative square
\[
\xymatrix{
\PP^1_A\times \PP^1_B\times \PP^1_C\ar[d]_{\mathbb{I}_A\times f_{B+C}}
\ar[rr]^{f_{A\otimes B}\times \mathbb{I}_C}&&\PP^3_{AB}\times \PP^1_C\ar[d]^{f_{AB\otimes C}}\\
\PP^1_A\times \PP^3_{BC}\ar[rr]^{f_{A\otimes BC}}&&\PP^7_{ABC}
}.
\]
We will show (Theorem \ref{entanglement_Segre}) that the Segre variety $\Sigma_{A\otimes B\otimes C}$ 
agrees with the intersection 
\[\Sigma_{A\otimes B\otimes C}= \Sigma_{A\otimes BC}\cap \Sigma_{AB\otimes C}.\]
In particular, we have 
\[\Jj_{A\otimes BC}(\psi)=0\text{ and } \Jj_{AB\otimes C}(\psi)=0\Longleftrightarrow [\psi]\in \Sigma_{A\otimes B\otimes C}.\]

In summary, the observables $\Jj_{AB\otimes C}$ and $\Jj_{A\otimes BC}$ determine separability of any three particle state in the $ABC$ order.
Of course, one can also measure separability for the orders $BAC$ and $CAB$ by consistently taking into account permutations of the chosen Hilbert bases, as we next illustrate.

Consider the following well-known states and their corresponding points in $\PP^7$:
\[\arraycolsep=1.4pt\def\arraystretch{1.6}
\begin{array}{lll}
\ket{\mathrm{Sep}}=\ket{000}&&[\mathrm{Sep}]=[1:0:0:0:0:0:0:0]\\
\ket{B_1}={1\over \sqrt{2}}\left(\ket{000}+\ket{011}\right)&\quad&[B_1]=[1:0:0:1:0:0:0:0]\\
\ket{B_2}={1\over \sqrt{2}}\left(\ket{000}+\ket{101}\right)&&[B_2]=[1:0:0:0:0:1:0:0]\\
\ket{B_3}={1\over \sqrt{2}}\left(\ket{000}+\ket{110}\right)&&[B_3]=[1:0:0:0:0:0:1:0]\\
\ket{W}={1\over \sqrt{3}}\left(\ket{100}+\ket{010}+\ket{001}\right)&&[W]=[0:1:1:0:1:0:0:0]\\
\ket{\mathrm{GHZ}}={1\over \sqrt{2}}\left(\ket{000}+\ket{111}\right)&\quad\quad\quad\quad&[\mathrm{GHZ}]=[1:0:0:0:0:0:0:1]\\
\end{array}
\]
These can be taken as representatives for the six existing equivalence classes of three particle states under SLOCC-equivalence.
The state $\ket{\mathrm{Sep}}$ is obviously separable and  $\ket{\mathrm{GHZ}}$ and $\ket{W}$ are the only genuinely 
entangled states. Moreover, these two entangled states represent two different equivalence classes of entanglement \cite{2000PhRvA..62f2314D}.
The former state, named after \cite{1989GHZ}, is maximally entangled with respect to most entanglement measures existing in the literature and its one-particle reduced density matrices are all maximally mixed. 
The remaining states $\ket{B_i}$ are 2-partite (depending on the order of the Hilbert basis).
We have the following table:
\[
\arraycolsep=4pt\def\arraystretch{1.4}
 \begin{array}{|c|c|c|c|c|c|c|c|}
 \hline
ABC&\mathrm{Sep}&{B_1}&{B_2}&{B_3}&{W}&\mathrm{GHZ}\\
 \hline\hline
 \Jj_{A\otimes BC}&0&1&1&0&{8\over 9}&1\\
 \Jj_{AB\otimes C}&0&0&1&1&{8\over 9}&1\\
 \hline\hline 
 \Jj&0&{1\over 2}&1&{1\over 2}&{8\over 9}&1\\
 \hline
\end{array}\]
Here the labels $ABC$ indicate we are measuring entanglement of the states with the fixed order $ABC$ and 
\[\Jj:={1\over 2}(\Jj_{A\otimes BC}+\Jj_{AB\otimes C})\] is the average measure. In particular, we see that while $B_1$ and $B_3$ are bipartite in this order, the state $B_2$ is classified as entangled.
Note however that if we measure entanglement with respect to the order $ACB$,
the roles of $B_2$ and $B_3$ are exchanged and we
obtain the following table.
\[
\arraycolsep=4pt\def\arraystretch{1.4}
 \begin{array}{|c|c|c|c|c|c|c|c|}
 \hline
ACB&\mathrm{Sep}&{B_1}&{B_2}&{B_3}&{W}&\mathrm{GHZ}\\
 \hline\hline
 \Jj_{A\otimes BC}&0&1&0&1&{8\over 9}&1\\
 \Jj_{AB\otimes C}&0&0&1&1&{8\over 9}&1\\
\hline\hline 
 \Jj&0&{1\over 2}&{1\over 2}&1&{8\over 9}&1\\
 \hline
\end{array}\]

The states $\ket{\mathrm{Sep}}$, $\ket{W}$ and $\ket{\mathrm{GHZ}}$
are invariant under permutations of the basis $ABC$ and so the values of the observables always remain unchanged.
Note as well that $\ket{\mathrm{W}}$ exhibits less entanglement than $\ket{\mathrm{GHZ}}$ with respect to the above measures, in agreement with the existing entanglement measures.

\section{Hypercube of Segre embeddings}\label{SecGeom}
This section is purely mathematical. Given an integer $n\geq 2$, we consider the generalized Segre embedding
\[\PP^1\times\stackrel{(n)}\cdots\times \PP^1\lra \PP^{2^n-1}\]
and introduce the notion of \textit{$q$-decomposability} of a point $z\in \PP^{2^n-1}$ for any integer $1<q\leq n$. We show that $q$-decomposability is detected by looking at all the Segre embeddings of the type 
\[\PP^{2^\ell-1}\times \PP^{2^{n-\ell}-1}\lra \PP^{2^n-1},\text{ for }1\leq \ell\leq n-1,\]
which accommodate as edges of a directed hypercube of Segre embeddings.
Let us first review some basic definitions and constructions.

The \textit{complex projective space} $\PP^N$ is  
the set of lines in the complex space $\CC^{N+1}$ passing through the origin. It may be described as the quotient
\[\PP^N:={\CC^{N+1}- \{0\}\over z\sim \lambda z},\, \lambda \in \CC^*.\]
A point $z\in \PP^N$ will be denoted by its homogeneous coordinates
$z=[z_0:\cdots:z_N]$ where, by definition,
there is always at least an integer $i$ such that $z_i\neq 0$, and for any $\lambda\in\CC^*$ we have 
\[[z_0:\cdots:z_N]=[\lambda z_0:\cdots:\lambda z_N].\]

\begin{defi}
Given positive integers $k$ and $\l$, the \textit{Segre embedding} $f_{k,\l}$ is the map
\[f_{k,\l}:\PP^k\times \PP^\l\lra \PP^{(k+1)(\l+1)-1}\]
defined by sending a pair of points $a=[a_0:\cdots:a_k]\in\PP^k$ and $b=[b_0:\cdots:b_\l]\in\PP^\ell$
to the point of $\PP^{(k+1)(\l+1)-1}$ whose homogeneous coordinates
are the pairwise products of the homogeneous coordinates of $a$ and $b$:
\[f_{k,\l}(a,b)=  [\cdots:z_{ij}:\cdots]\text{ with }z_{ij}:=a_ib_j,\]
where we take the lexicographical order.
\end{defi}
The Segre embedding is injective, but not surjective in general. 
The image of $f_{k,\l}$ is called the \textit{Segre variety} and is denoted by
\[\Sigma_{k,\l}:=\mathrm{Im}(f_{k,\l})=\left\{[\cdots:z_{ij}:\cdots]\in \PP^{(k+1)(\l+1)-1}; z_{ij}z_{i'j'}-z_{ij'}z_{i'j}=0,\,\forall\, i\neq i' j\neq j'\right\}.\] 
In other words, $\Sigma_{k,\l}$ is given by the zero locus of all the $2\times 2$ minors of the matrix
\[\left(
\begin{matrix}
 z_{00}&\cdots&z_{0\l}\\
 \vdots&\ddots&\vdots\\
 z_{k0}&\cdots&z_{k\l}
\end{matrix}\right).
\]
A combinatorial argument shows that there is a total of 
\[\xi_{k,\ell}:=\left(\begin{matrix}
                       k+1\\ 2
                      \end{matrix}\right)\cdot 
      \left(\begin{matrix}
                       \ell+1\\ 2
      \end{matrix}\right)= {k\cdot(k+1)\cdot \ell\cdot (\ell+1)\over 4 }
      \]
minors of size $2\times 2$ in such a matrix.

%
%

The above construction generalizes to products of more than two projective spaces of arbitrary dimensions as follows.
Given positive integers $k_1,\cdots,k_n$, 
let
 \[N(k_1,\cdots,k_n):=(k_1+1)\cdots(k_n+1)-1.\]
 For $1\leq j\leq n$, let $[a_{0}^{j}:\cdots:a_{k_j}^j]$ denote coordinates of $\PP^{k_j}$.
 \begin{defi}\label{defgenseg}
The \textit{generalized Segre embedding}
\[f_{k_1,\cdots,k_n}:\PP^{k_1}\times\cdots \times \PP^{k_n}\lra \PP^{N(k_1,\cdots,k_n)}\]
is defined by letting 
\[f_{k_1,\cdots,k_n}([\cdots:a^1_{i_1}:\cdots],\cdots,[\cdots:a^n_{i_n}:\cdots]):= [\cdots:z_{i_1\cdots i_n}:\cdots]
\text{ where }z_{i_1\cdots i_n}=a^1_{i_1}\cdots a^n_{i_n}
\]
and the lexicographical is assumed. Denote the \textit{generalized Segre variety} by 
\[\Sigma_{k_1,\cdots,k_n}:=\mathrm{Im}(f_{k_1,\cdots,k_n}).\]
 \end{defi}
 
It follows from the definition that every
generalized Segre embedding may be written as compositions of maps of the form
\[\mathbb{I}_{m}\times f_{k,\l}\times \mathbb{I}_{m'}\]
for certain values of $m$, $k$, $\l$ and $m'$, where $\mathbb{I}_m$ denotes the identity map of $\PP^m$.
These compositions may be arranged in a directed $(n-1)$-dimensional hypercube, where the initial vertex 
is $\PP^{k_1}\times\cdots\times\PP^{k_n}$ and the final vertex is $\PP^{N(k_1,\cdots,k_n)}$. 
Note that the $(n-1)$ final edges of the hypercube (those edges whose target is the final vertex $\PP^{N(k_1,\cdots,k_n)}$)
are given by Segre embeddings of bipartite-type
\[
f_{N(k_1,\cdots,k_j),N(k_{j+1},\cdots, k_n)}:\PP^{N(k_1,\cdots,k_j)}\times \PP^{N(k_{j+1},\cdots,k_n)}\lra 
\PP^{N(k_{1},\cdots,k_n)},
\]
where $1\leq j\leq n-1$.

\begin{exam}[4-particle states]\label{4qb}We have already seen the examples of two and three particle states in the warm-up section. For the generalized Segre embedding $f_{1,1,1,1}$ characterizing entanglement of four particle states of spin-${1\over 2}$, we obtain a cube with commutative faces
\SelectTips{eu}{12}
\[ \xymatrix{ \PP^1\times\PP^1\times\PP^1\times\PP^1 \ar[dd]_{f_{1,1}\times\mathbb{I}\times\mathbb{I}}\ar[rd]^{\mathbb{I}\times f_{1,1}\times\mathbb{I}} \ar[rr]^{\mathbb{I}\times\mathbb{I}\times f_{1,1}} && 
\PP^1\times\PP^1\times\PP^3 \ar'[d][dd]^{f_{1,1}\times\mathbb{I}} \ar[rd]^{\mathbb{I}\times f_{1,3}} \\
& \PP^1\times\PP^3\times\PP^1 \ar[dd]_(.3){f_{1,3}\times\mathbb{I}} \ar[rr]^(.4){\mathbb{I}\times f_{3,1}} && \PP^1\times\PP^7 \ar[dd]^{f_{1,7}} \\
\PP^3\times\PP^1\times\PP^1 \ar'[r][rr]^(.2){\mathbb{I}\times f_{1,1}} \ar[rd]^{f_{3,1}\times\mathbb{I}}  && \PP^3\times\PP^3 \ar[rd]^{f_{3,3}} \\
& \PP^7\times\PP^1 \ar[rr]^{f_{7,1}} && \PP^{15} }.
\]
\end{exam}

We next state a general decomposability result for arbitrary products of projective spaces.
Since our interest lies in spin-$\frac{1}{2}$ particle systems, for the sake of simplicity we will restrict to 
the case where the initial spaces are projective lines.
For any integer $m\geq 1$, let
\[N_m:=N(1,\stackrel{(m)}{\cdots},1)=2^m-1.\]
We will consider the decompositions associated to the generalized Segre embedding 
\[\PP^1\times\stackrel{(n)}\cdots\times \PP^1\lra \PP^{N_n}.\]

\begin{defi}\label{defdecompo}Let $n\geq 2$ and $1<q\leq n$ be integers.
We will say that a point $z\in\PP^{N_n}$ is \textit{$q$-decomposable}
if and only if there exist positive integers
$m_1,\cdots,m_q$ with $m_1+\cdots+m_q=n$ such that 
\[z\in \Sigma_{N_{m_1},\cdots,N_{m_q}}.\]
Note that $q$-decomposable implies $(q-1)$-decomposable.
If $z$ is not $2$-decomposable, we will say that it is \textit{indecomposable}.
\end{defi}

Points that are $q$-decomposable will correspond precisely
to $q$-partite states and indecomposable points will correspond to entangled states. 

\begin{exam}[2- and 3-particle states]
A point $z$ in $\PP^3$ is
$2$-decomposable if and only if $z\in \Sigma_{1,1}$.
Otherwise it is indecomposable.
A point $z$ in $\PP^7$ is 
$2$-decomposable if and only if 
 $z\in \Sigma_{3,1}\cup \Sigma_{1,3}$. It is
$3$-decomposable if and only if
 $z\in \Sigma_{1,1,1}$.
The following result shows that $z$ is actually $3$-decomposable if and only if 
$z\in\Sigma_{3,1}\cap\Sigma_{1,3}$, so that it is $2$-decomposable in every possible way. In physical terms, it just says that a state is $3$-partite if and only if it is $2$-partite when considering both types of bipartitions.
\end{exam}

\begin{theo}[Generalized Decomposability]\label{entanglement_Segre}
Let $n\geq 2$ and $1<q\leq n$ be integers.
A point $z\in \PP^{N_n}$ is $q$-decomposable if and only if 
it lies in at least $q-1$ different Segre varieties of the form 
$\Sigma_{N_\ell,N_{n-\ell}}$, with $1\leq \ell\leq n-1$.
\end{theo}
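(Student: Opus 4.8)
The plan is to translate membership in a bipartite Segre variety into a rank condition on a matricization of the coordinate tensor, and then to show that simultaneous rank-one conditions at several nested cuts force a factorization across their common refinement. Writing a point $z\in\PP^{N_n}$ by its homogeneous coordinates $z_{i_1\cdots i_n}$ (indexed by the $n$ binary labels), membership $z\in\Sigma_{N_\ell,N_{n-\ell}}$ is, by the minor description of the Segre variety recalled above, equivalent to the vanishing of all $2\times 2$ minors of the $2^\ell\times 2^{n-\ell}$ flattening matrix $M^{(\ell)}(z)$ whose rows are indexed by $(i_1,\dots,i_\ell)$ and whose columns are indexed by $(i_{\ell+1},\dots,i_n)$. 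Since $z\neq 0$, this says exactly that $\mathrm{rank}\,M^{(\ell)}(z)=1$, i.e. $z$ factors as a product across the cut separating the first $\ell$ slots from the last $n-\ell$. The whole theorem thus becomes a statement about rank-one flattenings, with the $n-1$ relevant cuts indexed by $\ell\in\{1,\dots,n-1\}$.

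The forward direction is the easy one. If $z$ is $q$-decomposable, say $z\in\Sigma_{N_{m_1},\dots,N_{m_q}}$, then $z$ is a product $a_1\otimes\cdots\otimes a_q$ along $q$ consecutive blocks of sizes $m_1,\dots,m_q$. Grouping the first $i$ blocks against the remaining $q-i$ blocks exhibits $z$ as a product across the cut $\ell_i:=m_1+\cdots+m_i$, hence $z\in\Sigma_{N_{\ell_i},N_{n-\ell_i}}$. As the $m_j$ are positive, the partial sums $\ell_1<\cdots<\ell_{q-1}$ are distinct, producing the required $q-1$ distinct bipartite Segre varieties.

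For the converse I would isolate the key reshaping lemma: if $z=u\otimes v$ across a cut $\ell_1$ with $u\neq 0$ on the first $\ell_1$ slots, then for any later cut $\ell_2>\ell_1$ the flattening factors as a Kronecker product $M^{(\ell_2)}(z)=u\otimes M^{(\ell_2-\ell_1)}(v)$, where $M^{(\ell_2-\ell_1)}(v)$ is the flattening of the tail tensor $v$ at its internal cut $\ell_2-\ell_1$. Multiplicativity of rank under Kronecker product together with $\mathrm{rank}\,u=1$ gives $\mathrm{rank}\,M^{(\ell_2)}(z)=\mathrm{rank}\,M^{(\ell_2-\ell_1)}(v)$. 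Now assume $z$ lies in at least $q-1$ of the varieties and choose cuts $\ell_1<\cdots<\ell_{q-1}$ among them. Splitting at the smallest cut gives $z=a_1\otimes v$ with $v$ a state on the last $n-\ell_1$ slots, and the lemma shows that $v$ has rank-one flattenings at the $q-2$ cuts $\ell_2-\ell_1<\cdots<\ell_{q-1}-\ell_1$. By induction on $q$ (the base case $q=2$ being the definition via the dictionary above), $v$ is $(q-1)$-decomposable as $a_2\otimes\cdots\otimes a_q$, whence $z=a_1\otimes\cdots\otimes a_q$ is $q$-decomposable.

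I expect the main obstacle to be the reshaping lemma, specifically the bookkeeping needed to verify $M^{(\ell_2)}(z)=u\otimes M^{(\ell_2-\ell_1)}(v)$ compatibly with the lexicographic ordering of the homogeneous coordinates, so that rows and columns are correctly identified with the Kronecker factors. Once this identification is pinned down, the rank computation and the induction are routine.
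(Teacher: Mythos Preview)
Your argument is correct, but it follows a genuinely different route from the paper's. The paper never invokes ranks or Kronecker products; instead it proves a local lemma (Lemma~\ref{pullback}) asserting that for any tripartite embedding one has $\Sigma_{k_1,k_2,k_3}=\Sigma_{k_1,N(k_2,k_3)}\cap\Sigma_{N(k_1,k_2),k_3}$, then interprets the vertices of the $(n-1)$-hypercube as degrees of decomposability and shows (Lemma~\ref{pullback_living}) that if $z$ lives in two vertices of the same degree then it also lives in their coordinatewise minimum. Iterating this ``meet'' operation starting from the $q-1$ degree-$(n-2)$ vertices brings $z$ down to a vertex of degree $n-q$, which is exactly $q$-decomposability. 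Your approach replaces this geometric descent by a single reshaping identity $M^{(\ell_2)}(u\otimes v)=u\otimes M^{(\ell_2-\ell_1)}(v)$ and an induction on $q$ that peels off the leftmost factor at each step. The trade-offs: your proof is shorter, self-contained, and makes the linear-algebraic content (rank-one flattenings) completely explicit; the paper's proof, on the other hand, makes the hypercube picture do real work and isolates the tripartite intersection statement $\Sigma_{k_1,k_2,k_3}=\Sigma_{k_1,N(k_2,k_3)}\cap\Sigma_{N(k_1,k_2),k_3}$ as a result of independent interest, which your induction never needs to state. The bookkeeping you flag (matching lexicographic order with the Kronecker block structure) is indeed the only delicate point, and it goes through exactly as you indicate.
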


For the particular extreme cases we have that a point $z\in\PP^{N_n}$ is:
\[\left\{\def\arraystretch{1.4}
\begin{array}{lll}
 \text{Indecomposable}&\Longleftrightarrow& z\notin \Sigma_{N_\ell,N_{n-\ell}},\text{ for all }1\leq \ell\leq n-1.\\
\text{$n$-decomposable}&\Longleftrightarrow& z\in \Sigma_{N_\ell,N_{n-\ell}},\text{ for all }1\leq \ell\leq n-1.\\
\end{array}\right.
\]

We refer to the Appendix for the proof.
This result will be essential in the next section, where we give a general method for measuring decomposability. Indeed, Theorem \ref{entanglement_Segre} asserts that decomposability is entirely determined by the Segre varieties 
$\Sigma_{N_\ell,N_{n-\ell}}$
for all $1\leq \ell\leq n-1$. Note that this family of varieties is 
the one arising when looking at the $(n-1)$ edges whose target is the last vertex of the $(n-1)$-dimensional hypercube and corresponds precisely to the family of all possible bipartitions of $\PP^{N_n}$.

This result will translate into taking $(n-1)$ measures of a given $n$-particle state, in order to detect the level of decomposability of its associated point in the projective space.

\begin{exam}[4-particle states]
In the situation of Example \ref{4qb} and in view of Theorem \ref{entanglement_Segre}, a point $z$ in $\PP^{15}$ is:
\[\left\{\def\arraystretch{1.4}
\begin{array}{lll}
 \text{indecomposable}&\Longleftrightarrow& z\notin \Sigma_{7,1}\cup \Sigma_{1,7}\cup \Sigma_{3,3}\\
 \text{$2$-decomposable}&\Longleftrightarrow& z\in \Sigma_{7,1}\cup \Sigma_{1,7}\cup \Sigma_{3,3}\\
 \text{$3$-decomposable}&\Longleftrightarrow&z\in (\Sigma_{1,7}\cap \Sigma_{7,1})\cup (\Sigma_{1,7}\cap\Sigma_{3,3})
 \cup(\Sigma_{7,1}\cap \Sigma_{3,3}).\\
 \text{$4$-decomposable}&\Longleftrightarrow&z\in  \Sigma_{7,1}\cap \Sigma_{1,7}\cap \Sigma_{3,3}
\end{array}\right.
\]
\end{exam}

\section{Operators controlling edges of the hypercube}\label{SecOps}

Given an $n$-particle state, in this section we define $(n-1)$ observables which measure its entanglement.
Let us first fix some notation.
We will denote by 
\[\sigma_0=\left(\begin{matrix}
                  1&0\\0&1
                 \end{matrix}
\right)\,;\,
\sigma_1=\left(\begin{matrix}
                  0&1\\1&0
                 \end{matrix}
\right)\,;\,
\sigma_2=\left(\begin{matrix}
                  0&-i\\i&0
                 \end{matrix}
\right)\,;\,
\sigma_3=\left(\begin{matrix}
                  1&0\\0&-1
                 \end{matrix}
\right)
\] 
and by $\mathbb{I}_k$ the identity matrix of size $k$.

The \textit{Kronecker product} of two matrices $\Aa=(a_{ij})\in \mathrm{Mat}_{k\times k}$ and $\Bb \in \mathrm{Mat}_{n\times n}$ is
the matrix of size $kn\times kn$ given by:
\[\Aa\otimes \Bb:=\left(
\begin{matrix}
a_{11}\Bb&\cdots& a_{1k}\Bb\\
\vdots&\vdots&\vdots\\
a_{k1}\Bb&\cdots&a_{kk}\Bb
\end{matrix}
\right).\]

The \textit{Hermitian product} of two complex 
 vectors $u=(u_0,\cdots,u_k)$ and $v=(v_0,\cdots,v_k)$ is
\[u\cdot v=\sum_{i=0}^k \overline{u_i}{v_i}.\]
Also, let 
\[||u||^2:=u\cdot u=\sum_{i=0}^k \overline{u}_i{u_i}.\]
For $\alpha$ a complex number, we denote $|\alpha|:=||\alpha||=\sqrt{\overline{\alpha}\cdot{\alpha}}$ its absolute value.

We will use the \textit{Lagrange identity}, which states that
\[||u||^2\cdot ||v||^2-|u\cdot v|^2=\sum_{i=0}^{k-1}\sum_{j=i+1}^k |u_iv_j-u_jv_i|^2.\]
Note that in many references, the term on the right side of the identity is often written in the equivalent form 
$|u_i\overline{v}_j-u_j\overline{v}_i|^2$ instead of $|u_iv_j-u_jv_i|^2$.

In order to describe $n$-particle states we fix an ordered basis
of $N_n=2^n-1$ linearly independent vectors of the corresponding Hilbert space
\[\ket{i_1\cdots i_n}=\ket{i_1}_{\Oo_1}\otimes\cdots\otimes\ket{i_n}_{\Oo_n}\]
where $\Oo_1,\cdots,\Oo_n$ denote the different observers and
$\{i_1,\cdots,i_n\}\in\{0,1\}$, since we are in the spin-$\frac{1}{2}$ case.
Using this basis, the coordinates for a general pure state for $n$ particles will be written as
\[\ket{\psi}_{\Oo_1\cdots\Oo_n}=\left(\begin{matrix}z_0\\z_1\\\vdots\\z_{N_n}\end{matrix}\right),\]
where $z_{i}$ are complex numbers satisfying the normalization condition
\[\sum_{i\geq 0} |z_i|^2=1.\]
Likewise, we will write:
\[\bra{\psi}=(\overline{z}_0,\overline{z}_1,\cdots,\overline{z}_{N_n}).\]

Given such a state, we may consider its class $[\psi]\in \PP^{N_n}$ by taking its homogeneous coordinates 
\[[\psi]=[z_0:\cdots:z_{N_n}],\]
where we recall that $[z_0:\cdots:z_{N_n}]=[\lambda z_0:\cdots: \lambda z_{N_n}]$ for any $\lambda\in\CC^*$.

For all $\ell=1,\cdots,n-1$, define the following observable acting on $n$-particle states:
\[\Jj_{n,\ell}(\psi):=2-\left({1\over 2^{\ell-1}}\sum_{i_1,\cdots,i_\ell=0}^3
|\bra{\psi}\sigma_{i_1}\otimes\cdots\otimes\sigma_{i_\ell}\otimes \mathbb{I}_{2^{n-\ell}}\ket{\psi}|^2\right).
\]

The purpose of this section is to show that
$\Jj_{n,\ell}(\psi)=0$ if and only if the class $[\psi]$ in the projective space $\PP^{N_n}$ lies in the Segre variety $\Sigma_{N_\ell,N_{n-\ell}}$.

The next two examples detail the computations of the observables introduced in Section \ref{Sec2part}
for the cases of two and three particles respectively.
Note that we used a slightly different notation, namely:
\begin{itemize}
 \item []2 particles: $\Jj_{A\otimes B}\equiv\Jj_{2,1}$ and $\Sigma_{A\otimes B}\equiv \Sigma_{1,1}$.
 \item []3 particles: $\Jj_{A\otimes BC}\equiv\Jj_{3,1}$, $\Sigma_{A\otimes BC}\equiv \Sigma_{1,3}$, 
$\Jj_{AB\otimes C}\equiv \Jj_{3,2}$,
and $\Sigma_{AB\otimes C}\equiv \Sigma_{3,1}$.
\end{itemize}

\begin{exam}[2-particle states]We have a single observable
\[\Jj_{2,1}(\psi)=2-\left(\sum_{i=0}^3 |\bra{\psi}\sigma_i\otimes \mathbb{I}_2\ket{\psi}|^2\right).\]
Define vectors $\Aa_0=(z_0,z_1)$ and $\Aa_1=(z_2,z_3)$, so that 
$\bra{\psi}=(\Aa_0,\Aa_1)$. Then we have
\[\def\arraystretch{1.6}
\begin{array}{ll}
\Jj_{2,1}(\psi)&=2-\left(|\Aa_0{\cdot}\Aa_0+\Aa_1{\cdot}\Aa_1|^2+|\Aa_0{\cdot}\Aa_1+\Aa_1{\cdot}\Aa_0|^2\right.\\
&\left.+|\Aa_0{\cdot}\Aa_1-\Aa_1{\cdot}\Aa_0|^2+|\Aa_0{\cdot}\Aa_0-\Aa_1{\cdot}\Aa_1|^2\right)=
\\
&
=1-\left(\left|\ov{z}_0z_0+\ov{z}_1z_1+\ov{z}_2z_2+\ov{z}_3z_3\right|^2+
\left|\ov{z}_0z_2+\ov{z}_1z_3+\ov{z}_2z_0+\ov{z}_3z_1\right|^2+\right.\\
&+\left.\left|-\ov{z}_0z_2-\ov{z}_1z_3+\ov{z}_2z_0+\ov{z}_3z_1\right|^2+
\left|\ov{z}_0z_0+\ov{z}_1z_1-\ov{z}_2z_2-\ov{z}_3z_3\right|^2
\right)=\\
&=1-\left(\left|\ov{z}_0z_2+\ov{z}_1z_3+\ov{z}_2z_0+\ov{z}_3z_1\right|^2+
\left|-\ov{z}_0z_2-\ov{z}_1z_3+\ov{z}_2z_0+\ov{z}_3z_1\right|^2+\right.\\
&\left.+\left|\ov{z}_0z_0+\ov{z}_1z_1-\ov{z}_2z_2-\ov{z}_3z_3\right|^2\right)=
4\left|z_0z_3-z_1z_2\right|^2.
\end{array}\]
Therefore $\Jj_{2,1}(\psi)=0$ if and only if $[\psi]\in \Sigma_{1,1}$.
\end{exam}

\begin{exam}[3-particle states]In this case we have two observables
\[\def\arraystretch{1.6}
\begin{array}{l}
\Jj_{3,1}(\psi):=2-\left(\sum_{i=0}^3 |\bra{\psi}\sigma_i\otimes I_4\ket{\psi}|^2\right)
\text{ and }\\
\Jj_{3,2}(\psi):=2-\left({1\over 2}\sum_{i,j=0}^3 |\bra{\psi}\sigma_i\otimes \sigma_j\otimes I_2\ket{\psi}|^2\right).
\end{array}\]
Write $z=(\Aa_0,\Aa_1)$ where $\Aa_0=(z_0,\cdots,z_3)$ and $\Aa_1=(z_4,\cdots,z_7)$.
Then we have 

\[\def\arraystretch{1.6}
\begin{array}{ll}
\Jj_{3,1}(\psi)=&2-(|\ov{\Aa}_0\Aa_0+\ov{\Aa}_1 \Aa_1|^2+ |\ov{\Aa}_0\Aa_0+\ov{\Aa}_1 \Aa_1|^2+
|\ov{\Aa}_0 \Aa_1+\ov{\Aa}_1 \Aa_0|^2+\\
&+|-\ov{\Aa}_0 \Aa_1+\ov{\Aa}_1 \Aa_0|^2+
|\ov{\Aa}_0 \Aa_0-\ov{\Aa}_1 \Aa_1|^2)=\\
&=1-(|\ov{\Aa}_0 \Aa_1+\ov{\Aa}_1 \Aa_0|^2+
|-\ov{\Aa}_0 \Aa_1+\ov{\Aa}_1 \Aa_0|^2+
|\ov{\Aa}_0 \Aa_0-\ov{\Aa}_1 \Aa_1|^2)=
\\
&=4\left\{
|z_0z_5-z_1z_4|^2+|z_0z_6-z_2z_4|^2+|z_0z_7-z_3z_4|^2+\right.\\
&\left.+|z_1z_6-z_2z_5|^2+|z_1z_7-z_3z_5|^2+|z_2z_7-z_3z_6|^2
\right\}.\,
\end{array}
\]

Note that the numbers $z_iz_j-z_j'z_i'$ correspond to 
the minors describing the zero locus of $\Sigma_{1,3}$. Indeed, this is determined by the vanishing of all the $2\times 2$ minors of the matrix 
\[
\left(
\begin{matrix}
z_0&z_1&z_2&z_3\\
z_4&z_5&z_6&z_7
\end{matrix}
\right).
\]
Therefore $\Jj_{3,1}(\psi)=0$ if and only if $[\psi]\in \Sigma_{1,3}$.

Likewise, writing $z=(\Aa_0,\Aa_1,\Aa_2,\Aa_3)$ with $\Aa_0=(z_0,z_1)$, $\Aa_1=(z_2,z_3)$, 
$\Aa_2=(z_4,z_5)$ and $\Aa_3=(z_6,z_7)$ we easily obtain
\begin{align*}
\Jj_{3,2}(\psi)=4\left\{
|z_0z_3-z_1z_2|^2+|z_0z_5-z_1z_4|^2+|z_0z_7-z_1z_6|^2+\right.\\
\left.+|z_2z_5-z_3z_4|^2+|z_2z_7-z_3z_6|^2+|z_4z_7-z_5z_6|^2
\right\}.\,\end{align*}
Note that the numbers $z_iz_j-z_j'z_i'$ correspond to 
the minors describing the zero locus of $\Sigma_{3,1}$. Indeed, this is determined by the vanishing of all the $2\times 2$ minors of the matrix 
\[
\left(
\begin{matrix}
z_0&z_1\\
z_2&z_3\\
z_4&z_5\\
z_6&z_7
\end{matrix}
\right).
\]
Therefore $\Jj_{3,2}(\psi)=0$ if and only if $[\psi]\in \Sigma_{3,1}$.
\end{exam}

Returning to the general setting, note that the measures
$\Jj_{n,\ell}(\psi)$ are related to
the trace of the squared density matrix for a given partition of the system.
Indeed, the density matrix for any physical state is a Hermitian operator and
therefore can be written in terms of $\sigma_{i}$ as
\[\rho = \sum_{i_1, ... i_n\in\{1,2,3\}}
c_{i_1,\cdots i_l}\sigma_{i_1}\otimes\cdots\otimes\sigma_{i_l}\otimes\mathbb{I}_{2^{n-\ell}}
+ d_{i_{l+1}\cdots i_n}
\mathbb{I}_{2^{\ell}}\otimes\sigma_{i_{l+1}}\otimes\cdots\otimes\sigma_{i_n},\]
where we write explicitly the partition of the system in $A$ and $B$.
In this notation, the reduced density matrix of the system $A$ reads
\[\rho_A = 2^{n-l}c_{i_1,\cdots i_l}\sigma_{i_1}\otimes\cdots\otimes\sigma_{i_l}\].
We may now compute expectation values for our
operators:
\[
\bra{\psi}\sigma_{i_1}\otimes\cdots\otimes\sigma_{i_\ell}\otimes
\mathbb{I}_{2^{n-\ell}}\ket{\psi} =
{\rm Tr}\left(\rho\cdot(\sigma_{i_1}\otimes\cdots\otimes\sigma_{i_l}\otimes\mathbb{I}_{2^{n-\ell}})\right),\]
 and using the identity
 \[{\rm Tr} \left((\sigma_{i_1}\otimes\cdots\otimes\sigma_{i_n})\cdot(
\sigma_{j_1}\otimes\cdots\otimes\sigma_{j_n})\right) = 2^n \delta_{i_1
  j_1}\delta_{i_2 j_2}\cdots \delta_{i_n j_n}\] we obtain
\[c_{i_1,\cdots i_l} = 2^n \bra{\psi}\sigma_{i_1}\otimes\cdots\otimes\sigma_{i_\ell}\otimes
\mathbb{I}_{2^{n-\ell}}\ket{\psi}.\]
We now compute the trace of the squared density matrix:
\[{\rm Tr}_A \rho_A^2 = 2^{2(n-l)} c_{i_1,\cdots i_l} c_{j_1,\cdots j_l} {\rm Tr} \left((\sigma_{i_1}\otimes\cdots\otimes\sigma_{i_n})\cdot(
\sigma_{j_1}\otimes\cdots\otimes\sigma_{j_n})\right) = 2^l \bra{\psi}\sigma_{i_1}\otimes\cdots\otimes\sigma_{i_\ell}\otimes
\mathbb{I}_{2^{n-\ell}}\ket{\psi}^2\] 
which leads to the identity
\[{\cal J}_{n,\ell}(\psi)=2 \left(1 - {\rm Tr_A} \rho_A^2 \right).\]

The main result of this section is the following:

\begin{theo}\label{teomeasure}
 Let $\ket{\psi}$ be a pure $n$-particle state and let $1\leq \ell\leq n-1$.
 Then 
\[\Jj_{n,\ell}(\psi)=4\sum_I |M_{I}|^2,\]
where the sum runs over all $2\times 2$ minors $M_I$ determining the zero locus of $\Sigma_{N_\ell,N_{n-\ell}}$.
In particular,
\[\Jj_{n,\ell}(\psi)=0\Longleftrightarrow [\psi]\in \Sigma_{N_\ell,N_{n-\ell}}.\]
\end{theo}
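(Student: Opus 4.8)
The plan is to reduce the statement to a single application of the Lagrange identity by passing through the reduced density matrix. The starting point is the identity $\Jj_{n,\ell}(\psi)=2\left(1-\mathrm{Tr}\,\rho_A^2\right)$ established just before the theorem, where $\rho_A$ is the reduced density matrix of the first $\ell$ particles. If one prefers a self-contained derivation of this identity, it follows from the observation that the $4^\ell$ Pauli strings $\sigma_{i_1}\otimes\cdots\otimes\sigma_{i_\ell}$ form an orthogonal Hermitian basis of the operator space on the first $\ell$ sites; since $\bra{\psi}\sigma_{i_1}\otimes\cdots\otimes\sigma_{i_\ell}\otimes\mathbb{I}_{2^{n-\ell}}\ket{\psi}=\mathrm{Tr}\left((\sigma_{i_1}\otimes\cdots\otimes\sigma_{i_\ell})\rho_A\right)$, Parseval gives $\sum_{i_1,\dots,i_\ell}|\bra{\psi}\sigma_{i_1}\otimes\cdots\otimes\sigma_{i_\ell}\otimes\mathbb{I}_{2^{n-\ell}}\ket{\psi}|^2=2^\ell\,\mathrm{Tr}\,\rho_A^2$.

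First I would set up the matrix bookkeeping that links $\rho_A$ to the Segre minors. Arranging the $2^n$ amplitudes of $\ket{\psi}$ (in the lexicographic order) as a $2^\ell\times 2^{n-\ell}$ matrix $Z=(z_{ab})$ whose rows are indexed by the first $\ell$ particles and whose columns by the remaining $n-\ell$, this is precisely the matrix whose $2\times 2$ minors cut out $\Sigma_{N_\ell,N_{n-\ell}}$. Writing $\Aa_a$ for its rows, a direct partial-trace computation gives $(\rho_A)_{aa'}=\Aa_{a'}\cdot\Aa_a$, hence $\mathrm{Tr}\,\rho_A=\sum_a\|\Aa_a\|^2=\|\psi\|^2=1$ and $\mathrm{Tr}\,\rho_A^2=\sum_{a,a'}|\Aa_a\cdot\Aa_{a'}|^2$.

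Next I would feed this into the Lagrange identity. Using $(\mathrm{Tr}\,\rho_A)^2=1$ to rewrite the constant term, one obtains
\[1-\mathrm{Tr}\,\rho_A^2=\sum_{a,a'}\Bigl(\|\Aa_a\|^2\|\Aa_{a'}\|^2-|\Aa_a\cdot\Aa_{a'}|^2\Bigr).\]
The diagonal terms $a=a'$ vanish, and for each off-diagonal pair the Lagrange identity converts the summand into $\sum_{b<b'}|z_{ab}z_{a'b'}-z_{ab'}z_{a'b}|^2$, i.e.\ the squared moduli of the $2\times 2$ minors of $Z$ formed from rows $a,a'$. Each unordered pair of rows is counted twice in the ordered sum, and the minor only changes sign when the two rows are swapped, so the ordered sum equals twice the sum over distinct minors; this yields $1-\mathrm{Tr}\,\rho_A^2=2\sum_I|M_I|^2$ and therefore $\Jj_{n,\ell}(\psi)=4\sum_I|M_I|^2$. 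Since the $M_I$ are exactly the minors whose vanishing defines $\Sigma_{N_\ell,N_{n-\ell}}$, the final equivalence $\Jj_{n,\ell}(\psi)=0\Longleftrightarrow[\psi]\in\Sigma_{N_\ell,N_{n-\ell}}$ is immediate, a sum of squared moduli being zero precisely when every term vanishes.

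The main obstacle I anticipate is keeping the conjugation conventions aligned across the three ingredients: the Hermitian product $u\cdot v=\sum\ov u_i v_i$, the partial trace defining $\rho_A$, and the precise form of the Lagrange identity. The paper's remark that the right-hand side may be written either as $|u_iv_j-u_jv_i|^2$ or as $|u_i\ov v_j-u_j\ov v_i|^2$ is exactly where a conjugation slip would corrupt the identification with the Segre minors; I would check once that, with the stated Hermitian product, it is the unconjugated minors $z_{ab}z_{a'b'}-z_{ab'}z_{a'b}$ that appear. The only other point needing care is the combinatorial factor $2$ from ordered versus unordered row pairs, which is what upgrades the prefactor $2$ in $\Jj_{n,\ell}=2(1-\mathrm{Tr}\,\rho_A^2)$ to the factor $4$ in the statement.
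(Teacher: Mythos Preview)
Your proof is correct and lands on the same core computation as the paper: block the coordinate vector into rows $\Aa_j$ of the $2^\ell\times 2^{n-\ell}$ matrix, reduce $\Jj_{n,\ell}(\psi)$ to a sum of terms $\|\Aa_j\|^2\|\Aa_k\|^2-|\Aa_j\cdot\Aa_k|^2$, and apply the Lagrange identity to recognize these as sums of squared $2\times 2$ minors.

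The one genuine difference is in how you reach that intermediate expression. The paper simply asserts
\[
\Jj_{n,\ell}(\psi)=4\sum_{j<k}\bigl(\|\Aa_j\|^2\|\Aa_k\|^2-|\Aa_j\cdot\Aa_k|^2\bigr),
\]
implicitly appealing to the direct Pauli-string expansion illustrated in the $n=2,3$ examples. You instead invoke the identity $\Jj_{n,\ell}(\psi)=2(1-\mathrm{Tr}\,\rho_A^2)$ established just before the theorem, compute $\mathrm{Tr}\,\rho_A^2=\sum_{a,a'}|\Aa_a\cdot\Aa_{a'}|^2$ from the Gram structure of $\rho_A$, and use the normalization $(\mathrm{Tr}\,\rho_A)^2=1$ to supply the $\|\Aa_a\|^2\|\Aa_{a'}\|^2$ term. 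This is a clean shortcut that makes the factor-of-$4$ bookkeeping (the $2$ from the trace identity times the $2$ from ordered versus unordered row pairs) more transparent than the Pauli computation, at the cost of relying on an identity proved outside the theorem itself. Your caution about the conjugation conventions is well placed but, as you note, the paper's form of the Lagrange identity already delivers the unconjugated minors.
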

\begin{proof}Write $(z_0,\cdots,z_{N_n})=(\Aa_0,\cdots,\Aa_{N_\ell})$, where 
\[\Aa_j=(z_{j(N_{n-\ell}+1)},\cdots,z_{j(N_{n-\ell}+1)+N_{n-\ell}})\]
for $j=0,\cdots,N_\ell$,
so that each $\Aa_j$ is tuple with ${N_{n-\ell}+1}$ components.
With this notation, we have 
\[\Jj_{n,\ell}(\psi)=4\sum_{j=0}^{N_\ell-1}\sum_{k>j}^{N_\ell}( ||\Aa_j||^2\cdot ||\Aa_k||^2-|\Aa_j\cdot \Aa_k|^2).\]
Applying the Lagrange identity we obtain
\[\Jj_{n,\ell}(\psi)=4\sum_{j=0}^{N_\ell-1}\sum_{k>j}^{N_\ell}
\sum_{s=0}^{N_\ell-1}\sum_{t>s}^{N_\ell} 
|\Aa_{j,s}\cdot \Aa_{k,t}-\Aa_{j,t}\cdot \Aa_{k,s}|^2
,\]
where $\Aa_{i,j}$ denotes the $j$-th component of the tuple $\Aa_i$.
It now suffices to note that the numbers 
\[\Aa_{j,s}\Aa_{k,t}-\Aa_{j,t}\Aa_{k,s}\]
correspond to the minors $M_I$. Indeed, the zero locus of $\Sigma_{N_\ell,N_{n-\ell}}$ is determined by the vanishing of the $2\times 2$ minors of the matrix 
\[
 \left(
\begin{matrix}
 \Aa_{0,0}&\cdots&\Aa_{0,N_{n-\ell}}\\
 \vdots&&\vdots\\
  \Aa_{N_\ell,0}&\cdots&\Aa_{N_\ell,N_{n-\ell}}
\end{matrix}
 \right).\qedhere
\]
\end{proof}

\section{Physical interpretation}\label{Secphysical}
Given an integer $n\geq 1$, we fix an ordered basis of 
 $N_n=2^n-1$ linearly independent vectors of the Hilbert space
of pure $n$-particle states of spin-$\frac{1}{2}$
 \[\ket{i_1\cdots i_n}_{\Oo_1\cdots\Oo_n}=\ket{i_1}_{\Oo_1}\otimes\cdots\otimes\ket{i_n}_{\Oo_n},\]
where $\Oo_1,\cdots,\Oo_n$ denote the different observers.
We will omit the labels of the observers, but one should note that, in the following, the chosen basis always has the same fixed order unless stated otherwise.

\begin{defi}Let $1\leq q\leq n$ be an integer.
An $n$-particle state $\ket{\psi}$ is said to be \textit{$q$-partite} if it can be written as  
\[\ket{\psi}=\ket{\psi_{1}}\otimes\cdots\otimes\ket{\psi_{q}},\]
where $\ket{\psi_i}$ are $n_i$-particle states, with $n_i>0$ and $n_1+\cdots+n_q=n$.
\end{defi}

$1$-partite states are called \textit{entangled}, 
while $n$-partite states are called \textit{separable}.
Physically, separable states are those that are uncorrelated. A product state can thus be easily
prepared in a local way: each observer $\Oo_i$ produces the state $\ket{\psi_i}$ and the measurement
outcomes for each observer do not depend on the outcomes for the other observers.

A basic observation is that a state is separable if and only if it lies in the generalized Segre variety of Definition \ref{defgenseg}
(see for instance \cite{Ballico}). Likewise, a state is $q$-partite if and only if its corresponding projective point on $\PP^{N_n}$ lies in a Segre variety of the form 
\[\Sigma_{N_{m_1},\cdots,N_{m_q}}\]
with $m_1,\cdots,m_q$ positive integers such that $m_1+\cdots+m_q=n$. So $q$-partite states correspond geometrically to the $q$-decomposable points of Definition \ref{defdecompo}.
Combining the results of the previous two sections we have that an
$n$-particle state $\ket{\psi}$ is $q$-partite if and only if 
there are $q-1$ indices $\ell_1,\cdots,\ell_{q-1}$ with $1\leq \ell_i\leq n-1$
and $\ell_i\neq \ell_j$
such that $\Jj_{n,\ell_i}(\psi)=0$.
Indeed, from Theorem \ref{entanglement_Segre}
we know that $\ket{\psi}$ is $q$-partite
if and only if its corresponding point $[\psi]$ in $\PP^{N_n}$
lies in at least $q-1$ different Segre varieties of the form 
$\Sigma_{N_\ell,N_{n-\ell}}$, with $1\leq \ell\leq n-1$.
Moreover, from Theorem \ref{teomeasure} we know that $[\psi]\in \Sigma_{N_\ell,N_{n-\ell}}$ if and only if $\Jj_{n,\ell}(\psi)=0$.

Note that, a priori, given an $n$-particle state, one would have to take $(n-1)!$ measures of bipartite type to completely determine its 
$q$-decomposability (namely, for each possible bipartition, check further bipartitions recursively). The hypercube approach tells us that it suffices to take $(n-1)$ measures, corresponding to the operators $\{\Jj_{n,\ell}\}$ in order to determine completely its $q$-decomposability. 

In the remaining of the section, we study the behaviour of the observables $\Jj_{n,\ell}$ in some particular cases of interest. We first introduce the average observable acting on $n$-particle states:
\[\Jj(\psi):={1\over n-1}\sum_{\ell=1}^{n-1}\Jj_{n,\ell}(\psi).\]

The two- and three- particle states discussed in Section \ref{Sec2part} which are invariant under permutations of the Hilbert basis ($\ket{\mathrm{Sep}}$, $\ket{\mathrm{EPS}}$, $\ket{\mathrm{GHS}}$, $\ket{W}$) allow for natural generalizations to the $n$-particle case. We study their entanglement measures.

Note first that the separable $n$-particle state
\[\ket{\mathrm{Sep}_n}:=\ket{0 \,\stackrel{(n)}{\cdots}\, 0}\]
corresponds to the point in $\PP^{N_n}$ given by 
\[[\mathrm{Sep}_n]=[1:0: \,{ \cdots } \,:0]\]
and so one easily verifies that $\Jj(\mathrm{Sep}_n)=0$. 

The \textit{Schrödinger $n$-particle state}
is a superposition of two maximally distinct states
\[\ket{S_n}:={1\over \sqrt{2}}\left(\ket{0\,\stackrel{(n)}{\cdots}\, 0}+\ket{1\,\stackrel{(n)}{\cdots} \,1}\right).\]
It generalizes the two-particle state $\ket{\mathrm{EPS}}$ and the three-particle state $\ket{\mathrm{GHZ}}$ and it
corresponds to the point in $\PP^{N_n}$ given by 
\[[S_n]=[1:0:\,\cdots\,:0:1].\]
One easily computes $\Jj_{n,\ell}(S_n)=1$ for all $1\leq \ell\leq n$ and so $\Jj(S_n)=1$.
For $n>3$, its not clear that the Schrödinger state exhibits maximal entanglement \cite{Higuchi}.
This observation agrees with our measures for $\Jj_{n,\ell}$, as we will see below.

We now consider a generalization of the $W$ state for three particles, to the case of $n$-particles. For each fixed integer $0\leq k\leq n$, these states are constructed by adding all permutations of generators of the form
\[\ket{1}\otimes\stackrel{(k)}{\cdots}\otimes\ket{1}\otimes \ket{0}\otimes\stackrel{(n-k)}{\cdots}\otimes\ket{0}\]
with $k$ states $\ket{1}$ and $n-k$ states $\ket{0}$, together with a global normalization constant. These are clearly invariant with respect to permutations of the basis. Denote such states by 
\[\ket{D_{n,k}}={\binom{n}{k}}^{-{1\over 2}}\sum_{\mathrm{permut}} \ket{1}\otimes\stackrel{(k)}{\cdots}\otimes\ket{1}\otimes \ket{0}\otimes\stackrel{(n-k)}{\cdots}\otimes\ket{0}.\]
These are known as \textit{Dicke states} \cite{Dicke}.
Note that $\ket{D_{n,0}}=\ket{\mathrm{Sep}_n}$ and $\ket{D_{3,1}}=\ket{W_3}$. In the case of four particles we have 
\[\def\arraystretch{1.6}
\begin{array}{ll}
\ket{D_{4,1}}={1\over \sqrt{4}}\left(\ket{1000}+\ket{0100}+\ket{0010}+\ket{0001}\right),\\
\ket{D_{4,2}}={1\over \sqrt{6}}\left(\ket{1100}+\ket{1010}+\ket{1001}+\ket{0110}
+\ket{0101}+\ket{0011}\right),\\
\ket{D_{4,3}}={1\over \sqrt{4}}\left(\ket{1110}+\ket{1101}+\ket{1011}+\ket{0111}\right).
\end{array}
\]
Their corresponding points in $\PP^{15}$ are 
\[\def\arraystretch{1.6}
\begin{array}{ll}
\ket{D_{4,1}}=[0:1:1:0:1:0:0:0:1:0:0:0:0:0:0:0],\\
\ket{D_{4,2}}=[0:0:0:1:0:1:1:0:0:1:1:0:1:0:0:0],\\
\ket{D_{4,3}}=[0:0:0:0:0:0:0:1:0:0:0:1:0:1:1:0].
\end{array}
\]
We obtain the following table for the observables $\Jj_{4,\ell}$:
\[
\arraycolsep=4pt\def\arraystretch{1.4}
 \begin{array}{|c|c|c|c|c|c|c|c|}
 \hline
 & \Ket{D_{4,1} }          & \Ket{D_{4,2}} & \Ket{D_{4,3}}          \\
   \hline\hline
   \Jj_{4,1}  & \frac{3}{4} & 1 & \frac{3}{4}  \\
   \Jj_{4,2} & 1           & 1 & 1            \\
   \Jj_{4,3}  & \frac{3}{4} & 1 & \frac{3}{4}  \\
 \hline
 \Jj&{5\over 6}&1&{5\over 6}\\
 \hline
 \end{array}\]
 
 Note $\Jj(D_{4,2})=1$ as is the case for the state $\ket{S_4}$.
For more than four particles we obtain values $>1$ for the observables $\Jj_{n,\ell}$. For instance, in the five particle case, we have:

\[
\arraycolsep=4pt\def\arraystretch{1.4}
 \begin{array}{|c|c|c|c|c|c|c|c|}
 \hline
  &  \Ket{D_{5,1}}          &  \Ket{D_{5,2}}             &  \Ket{D_{5,3}}            &  \Ket{D_{5,4}}             \\
   \hline\hline
   \Jj_{5,1}  & \frac{16}{25}& \frac{24}{25}  & \frac{24}{25}  & \frac{16}{25}  \\
   \Jj_{5,2}  & \frac{24}{25}& \frac{27}{25} & \frac{27}{25} & \frac{24}{25}  \\
   \Jj_{5,3}  & \frac{24}{25}& \frac{27}{25} & \frac{27}{25} & \frac{24}{25}  \\
   \Jj_{5,4}  & \frac{16}{25}& \frac{24}{25}  & \frac{24}{25}  & \frac{16}{25}  \\
   \hline
   \Jj&{4\over 5}&{51\over 50}&{51\over 50}&{4\over 5}\\
 \hline
 \end{array}.\]
 In particular, we see that 
 \[\Jj(D_{5,2})=\Jj(D_{5,3})>\Jj(S_5)=1.\]
 For higher particle states the same pattern repeats itself, with the middle states 
 \[\ket{D_{n,\lfloor{n\over 2}\rfloor}}=\ket{D_{n,\lceil{n\over 2}\rceil}}\] always exhibiting the largest entanglement as well as symmetries of the tables in both directions.

We end this section with some notable examples in the four- and five-particle cases.
The state 
\[\ket{\mathrm{HS}}={1\over \sqrt{6}}\left(
\ket{1100}+\ket{0011}+\omega \ket{1001}+ \omega \ket{0110}+ \omega^2\ket{1010}+\omega^2\ket{0101}
\right)\]
where $\omega=e^{2\pi i\over 3}$ is a third root of unity,
was conjectured to be
maximally entangled by Higuchi-Sudbery \cite{Higuchi} and it actually
gives a local maximum of the average two-particle von
Neumann entanglement entropy \cite{Brierley}.
Another highly (though not maximally) entangled state, found by 
Brown-Stepney-Sudbery-Braunstein
\cite{Brown}, is
given by 
\[\ket{\mathrm{BSSB}_4}={1\over 2}\left(
\ket{0000}+\ket{+}\otimes\ket{011}+\ket{1101}+\ket{-}\otimes\ket{110}
\right),\]
where $\ket{+}={1\over \sqrt{2}}(\ket{0}+\ket{1})$ and $\ket{-}={1\over \sqrt{2}}(\ket{0}-\ket{1})$.
Our measures agree with these facts, as shown in the table below.
\[
\arraycolsep=4pt\def\arraystretch{1.4}
 \begin{array}{|c|c|c|c|c|c|c|c|}
 \hline
           &  \Ket{S_4}              &  \Ket{D_{4,2}}            & \ket{\mathrm{BSSB}_4}  &\Ket{\mathrm{HS}}             \\
   \hline\hline
   \Jj_{4,1}  &  1&  1&{3\over 4}&1\\
   \Jj_{4,2}   &  1&  1&{5\over 4}&{4\over 3}\\
   \Jj_{4,3}   &  1&  1&1&1\\
   \hline
   \Jj&1&  1&  1&{10\over 9}\\
 \hline
 \end{array}.\]
In \cite{Xinos}, a related measure of entanglement is introduced,
based on vector lengths and the angles between vectors of certain coefficient
matrices. While this measure is strongly related to concurrence and hence to the observables $\Jj$, their measure $E_{avg}$ does not distinguish the states  $\ket{\mathrm{BSSB}_4}$ and $\Ket{\mathrm{HS}}$. In contrast, we do find that 
\[1=\Jj(\mathrm{BSSB}_4)<\Jj(\mathrm{HS}).\]
In \cite{Brown}, a highly entangled five-particle state is described as 
\[\ket{\mathrm{BSSB}}_5={1\over 2}\left(
\ket{000}\otimes\ket{\Phi_-}+
\ket{010}\otimes\ket{\Psi_-}+
\ket{100}\otimes\ket{\Phi_+}+
\ket{111}\otimes\ket{\Psi_+}
\right),\]
 where $\ket{\Psi_{\pm}}=\ket{00}\pm\ket{11}$ and $\ket{\Phi_{\pm}}=\ket{01}\pm\ket{10}$.
 Our measures give:
\[
\arraycolsep=4pt\def\arraystretch{1.4}
 \begin{array}{|c|c|c|c|c|c|c|c|}
 \hline
  &  \Ket{S_5}          &  \Ket{D_{5,2}}             &  \Ket{\mathrm{BSSB}_5}           \\
   \hline\hline
   \Jj_{5,1}  & 1 &\frac{24}{25}  &  1  \\
   \Jj_{5,2}  & 1&\frac{27}{25} &  {3\over 2}  \\
   \Jj_{5,3}  & 1& \frac{27}{25} &  {5\over 4}  \\
   \Jj_{5,4}  & 1& \frac{24}{25}  &  1  \\
   \hline
   \Jj&1&{51\over 50}&{19\over 16}\\
 \hline
 \end{array}\]
In particular, we see that 
\[1=\Jj(S_5)<\Jj(D_{5,2})< \Jj(\mathrm{BSSB}_5).\]
 
\section{Summary and conclusions}

Within a purely geometric framework, we have described entanglement of $n$-particle states in terms of a hypercube of bipartite-type Segre maps.
For simplicity, in this paper we have restricted to spin-${1\over 2}$ particles, but the geometric results generalize almost verbatim to the case of qudits. The hypercube picture allows to identify 
separability (or more generally, $q$-decomposability) in terms of a depth factor within the hypercube: the deeper a state lies in the hypercube, the more separable.

We have defined a collection of operators which measure the properties of a state in the above geometric set-up.
Given an $n$-particle state and having fixed an ordered basis of the total Hilbert space, there are $2^{n-1}$ different decomposability possibilities, given by the different ordered $q$-partitions, for $1\leq q\leq n$. A standard way to characterize the decomposability of such a state would be to consider, for any possible bipartition, all of its possible bipartitions in a recursive way. This gives a total of $(n-1)!$ measures to be taken.
Our hypercube approach says that it suffices to take $(n-1)$ measures, given by the operators $\Jj_{n,\ell}$, with $1\leq \ell\leq n-1$. So as the complexity of the problem grows factorially, our solution just grows linearly on $n$.

The operators $\Jj_{n,\ell}$ measure the different bipartitions of
the system, corresponding geometrically with the last $(n-1)$ edges of
the Segre hypercube. The expected value of these operators is
related with the quantum
Tsalis entropy ($q=2$) of both parts of the state.
The concrete values of $\ell$ for which $\Jj_{n,\ell}$ vanishes shows
precisely in what edge of the hypercube the state belongs or,
more physically, in which parts the state is separable.  

To illustrate the physical interest of our approach, we have computed the value of the operators $\Jj_{n,\ell}$
for various entangled states considered in the literature. The motivation for many of them arises in quantum computing and are therefore classified from
the quantum control perspective. In all cases, the
proposed observables give results consistent with the expectations.

\newpage

\appendix

\section{Proof of the Generalized Decomposability Theorem}
This appendix is devoted to the proof of Theorem \ref{entanglement_Segre} on geometric decomposability.

Let us first consider the tripartite-type
Segre embedding
\[f_{k_1,k_2,k_3}
:\PP^{k_1}\times \PP^{k_2}\times\PP^{k_3}\to \PP^{N(k_1,k_2,k_3)},\]
where we recall that
\[N(k_1,\cdots,k_n):=(k_1+1)\cdots(k_n+1)-1.\]
The following lemma relates the Segre varieties associated to it.
\begin{lemm}\label{pullback}
Let $k_1,k_2,k_3$ be positive integers.
Then
\[\Sigma_{k_1,k_2,k_3}=\Sigma_{k_1,N(k_2,k_3)}\cap \Sigma_{N(k_1,k_2),k_3}.\]
\end{lemm}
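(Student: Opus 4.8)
The plan is to prove the two inclusions separately: the inclusion $\Sigma_{k_1,k_2,k_3}\subseteq \Sigma_{k_1,N(k_2,k_3)}\cap \Sigma_{N(k_1,k_2),k_3}$ is formal, while the reverse inclusion rests on a rank-one factorization of tensors. For the easy direction I would invoke the two factorizations of the generalized Segre embedding through intermediate products, exactly as in the commutative square of the three-particle warm-up:
\[f_{k_1,k_2,k_3}=f_{k_1,N(k_2,k_3)}\circ(\mathbb{I}_{k_1}\times f_{k_2,k_3})=f_{N(k_1,k_2),k_3}\circ(f_{k_1,k_2}\times\mathbb{I}_{k_3}).\]
Each factorization exhibits $\mathrm{Im}(f_{k_1,k_2,k_3})$ as contained in the image of one of the two bipartite Segre maps, so any point of $\Sigma_{k_1,k_2,k_3}$ automatically lies in both $\Sigma_{k_1,N(k_2,k_3)}$ and $\Sigma_{N(k_1,k_2),k_3}$. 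This step requires only that the lexicographic index conventions be compatible, which they are: ordering the triples $(i_1,i_2,i_3)$ lexicographically agrees both with first ordering the pairs $(i_1,(i_2,i_3))$ and with first ordering the pairs $((i_1,i_2),i_3)$.

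For the reverse inclusion I would use the explicit minor description of the Segre varieties given in the excerpt. Fix homogeneous coordinates $z_{i_1 i_2 i_3}$ on $\PP^{N(k_1,k_2,k_3)}$ indexed by triples. Membership of $z$ in $\Sigma_{k_1,N(k_2,k_3)}$ says precisely that the flattening matrix $M$ with rows indexed by $i_1$ and columns by the pair $(i_2,i_3)$ has all $2\times 2$ minors equal to zero, hence has rank $\le 1$; membership in $\Sigma_{N(k_1,k_2),k_3}$ says the flattening $M'$ with rows indexed by $(i_1,i_2)$ and column by $i_3$ has rank $\le 1$. Since $z\neq 0$ as a point of projective space, both flattenings are nonzero of rank exactly one, and therefore factor as outer products: there exist tuples $a=(a_{i_1})$, $w=(w_{i_2 i_3})$, $u=(u_{i_1 i_2})$, $c=(c_{i_3})$ with $z_{i_1 i_2 i_3}=a_{i_1}w_{i_2 i_3}$ and $z_{i_1 i_2 i_3}=u_{i_1 i_2}c_{i_3}$.

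It then remains to descend from these two partial factorizations to a genuine triple factorization. I would choose indices $(p,q,r)$ with $z_{pqr}\neq 0$, which forces $a_p\neq 0$, and then read off from the chain of equalities $a_p\,w_{i_2 i_3}=z_{p i_2 i_3}=u_{p i_2}\,c_{i_3}$ the identity $w_{i_2 i_3}=b_{i_2}c_{i_3}$, where $b_{i_2}:=u_{p i_2}/a_p$. Substituting back gives $z_{i_1 i_2 i_3}=a_{i_1}b_{i_2}c_{i_3}$, so that $z=f_{k_1,k_2,k_3}([a],[b],[c])$ lies in $\Sigma_{k_1,k_2,k_3}$; the nonvanishing of $z_{pqr}$ guarantees that $a$, $b$ and $c$ are all nonzero and hence define genuine points of $\PP^{k_1}$, $\PP^{k_2}$ and $\PP^{k_3}$.

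I do not expect a serious obstacle here, since once the problem is translated into tensor flattenings it becomes elementary linear algebra about rank-one matrices. The one place that demands care is keeping the two lexicographic flattening conventions consistent, so that the same entry $z_{i_1 i_2 i_3}$ is genuinely being read both as the $(i_1,(i_2,i_3))$ entry of $M$ and as the $((i_1,i_2),i_3)$ entry of $M'$; once this identification is pinned down, combining the two rank-one decompositions as above closes the argument. Note that this establishes the equality as point sets, which is exactly what is asserted, and that this lemma is precisely the tripartite base case from which the general statement of Theorem \ref{entanglement_Segre} can be bootstrapped by induction on the number of factors.
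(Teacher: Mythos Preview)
Your proof is correct and follows essentially the same approach as the paper's: both dismiss the inclusion $\Sigma_{k_1,k_2,k_3}\subseteq\Sigma_{k_1,N(k_2,k_3)}\cap\Sigma_{N(k_1,k_2),k_3}$ as formal via the two factorizations of $f_{k_1,k_2,k_3}$, and for the reverse inclusion both use one bipartite membership to write $z_{ijk}$ as a product and then exploit the other membership at a nonzero index to split the remaining factor. The only cosmetic difference is that you invoke the rank-one/outer-product description of both flattenings explicitly, whereas the paper stays with the $2\times 2$ minor equations and specializes $k=k'$ with $c_k\neq 0$ to deduce that the intermediate matrix $x_{ij}$ has vanishing minors; these are two phrasings of the same linear-algebra step.
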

\begin{proof}
It suffices to prove the inclusion $\Sigma_{k_1,N(k_2,k_3)}\cap \Sigma_{N(k_1,k_2),k_3}\subseteq \Sigma_{k_1,k_2,k_3}$.
Recall that we  have identities
\[f_{k_1,k_2,k_3}=
f_{k_1,N(k_2,k_3)}\circ(\mathbb{I}\times f_{k_2,k_3})=
f_{N(k_1,k_2),k_3}\circ(f_{k_1,k_2}\times\mathbb{I}).\]
Let $a=[a_i]$, $b=[b_j]$ and $c=[c_k]$ be coordinates for $\PP^{k_1}$, $\PP^{k_2}$ and $\PP^{k_3}$ respectively.
We will also let $x=[x_{ij}]$ and $y=[y_{jk}]$ be coordinates for $\PP^{N(k_1,k_2)}$ and $\PP^{N(k_2,k_3)}$ respectively,
and $z=[z_{ijk}]$ will denote coordinates for $\PP^{N(k_1,k_2,k_3)}$.

We have:
\begin{align*}
(f_{k_1,k_2}\times\mathbb{I})(a,b,c)=(x,c),\text{ with }x_{ij}:=a_ib_j.\\
(\mathbb{I}\times f_{k_2,k_3})(a,b,c)=(a,y),\text{ with }y_{jk}:=b_jc_k.\\
f_{N(k_1,k_2),k_3}(x,c)=(z),\text{ with }z_{ijk}:=x_{ij}c_k.\\
f_{k_1,N(k_2,k_3)}(a,y)=(z),\text{ with }z_{ijk}:=a_i y_{jk}.
\end{align*}
Given a point $z=(z_{ijk})\in \PP^{N(k_1,k_2,k_3)}$, we claim the following:
\begin{enumerate}
 \item $z\in\Sigma_{k_1,N(k_2,k_3)}$ if and only if
 all the $2\times 2$ minors of the matrix 
\[
\left(
\begin{matrix}
z_{000}&\cdots&z_{0 k_2 k_3}\\
\vdots&&\vdots\\
z_{k_100}&\cdots&z_{k_1 k_2 k_3}
\end{matrix}
\right)
\]
vanish, so that $z_{ijk}\cdot z_{i'j'k'}=z_{i'jk}\cdot z_{ij'k'}$ for all $i\neq i'$ and $(j,k)\neq (j',k')$.
\item $z\in\Sigma_{N(k_1,k_2),k_3}$ if and only if
all the $2\times 2$ minors of the matrix 
\[
\left(
\begin{matrix}
z_{000}&\cdots&z_{0 0 k_3}\\
\vdots&&\vdots\\
z_{k_1k_20}&\cdots&z_{k_1 k_2 k_3}
\end{matrix}
\right)
\]
vanish, so that $z_{ijk}\cdot z_{i'j'k'}=z_{ijk'}\cdot z_{i'j'k}$ for all $(i,j)\neq (i',j')$ and $k\neq k'$.
\item $z\in \Sigma_{k_1,k_2,k_3}$ if and only if 
$z\in\Sigma_{N(k_1,k_2),k_3}$, so that
$z_{ijk}=x_{ij} \cdot c_k$, and $x=(x_{ij})\in \Sigma_{k_1,k_2}$. This last condition gives the vanishing 
of the  $2\times 2$ minors of the matrix
\[\left(
\begin{matrix}
x_{00}&\cdots& x_{0k_2}\\
\vdots&&\vdots\\
x_{k_10}&\cdots&x_{k_1k_2}
\end{matrix}
\right).
\]
Therefore we have $x_{ij}\cdot x_{i'j'}=x_{ij'}\cdot x_{i'j}$ for all $i\neq i'$, $j\neq j'$. This gives identities
\[z_{ijk}\cdot z_{i'j'k'}=x_{ij}\cdot c_k\cdot x_{i'j'}\cdot c_{k'}=x_{i'j}\cdot c_k\cdot x_{ij'}\cdot c_{k'}
=z_{i'jk}\cdot z_{ij'k'}\]
for all $i\neq i'$, $j\neq j'$ and $k\neq k'$.
\end{enumerate}
Claims (1) and (2) are straightforward, while (3) follows from the identity
\[\Sigma_{k_1,k_2,k_3}=\mathrm{Im}(f_{N(k_1,k_2),k_3}\circ (f_{k_1,k_2}\times \mathbb{I})).\]

Assume now that $z\in\Sigma_{k_1,N(k_2,k_3)}\cap \Sigma_{N(k_1,k_2),k_3}$.
Then the equations in (1) and (2) are satisfied, and moreover we may write 
$z_{ijk}=x_{ij}\cdot c_k$. To show that 
$z\in \Sigma_{k_1,k_2,k_3}$ it
only remains to prove that 
$x_{ij}\cdot x_{i'j'}=x_{ij'}\cdot x_{i'j}$ for all $i\neq i'$ and $j\neq j'$. By (1) we have 
\[z_{ijk}\cdot z_{i'j'k'}=x_{ij}\cdot c_k\cdot x_{i'j'}\cdot c_{k'}=x_{i'j}\cdot c_k\cdot x_{ij'}\cdot c_{k'}
=z_{i'jk}\cdot z_{ij'k'}\]
for all $i\neq i'$ and $(j,k)\neq (j',k')$. Take $k=k'$ such that $c_k\neq 0$. This gives 
\[c_k^2\cdot x_{ij}\cdot x_{i'j'}=c_k^2\cdot x_{i'j}\cdot x_{ij'}\text{ for all }i\neq i'\text{ and }j\neq j'.\]
Since $c_k^2\neq 0$, we obtain the desired identities.
\end{proof}

In order to prove the Generalized Decomposability Theorem it will be useful to denote the cubical decomposition of 
\[\PP^1\times\stackrel{(n)}\cdots\times \PP^1\lra \PP^{N_n}\]
by the $(n-1)$-dimensional hypercube whose vertices are given by tuples 
$v=(v_1,\cdots,v_{n-1})$ where $v_i\in \{0,1\}$,
with initial vertex $v_0=(0,\cdots,0)$ representing $\PP^1\times\stackrel{(n)}{\cdots}\times\PP^1$
and final vertex $v_f=(1,\cdots,1)$
representing $\PP^{N_n}$.
We will call $|v|:=v_1+\cdots+v_{n-1}$ the \textit{degree of a vertex}. 
All edges are of the form 
\[(v_1,\cdots,v_{n-1})\stackrel{[j]}{\lra} (w_1,\cdots,w_{n-1})\] where $v_i=w_i$ for all $i\neq j$ 
and $w_j=v_j+1=1$, so that $|w|=|v|+1$.
Such an edge represents a contraction of a product $\times$ at the position $j$ via a Segre embedding.

\begin{exam}
For example, the cubical representation of
$f_{1,1,1}$ 
is
\[
\xymatrix{
(00)\ar[d]_{[2]}
\ar[rr]^{[1]}&&(10)\ar[d]^{[2]}\\
(01)\ar[rr]^{[1]}&&(11)
},
\]
and the cubical representation of
$f_{1,1,1,1}$ is
\SelectTips{eu}{12}
\[ \xymatrix{ (000) \ar[dd]_{[1]}\ar[rd]^{[2]} \ar[rr]^{[3]} && (001) \ar'[d][dd]^{[1]} \ar[rd]^{[2]} \\
& (010) \ar[dd]_(.3){[1]} \ar[rr]^(.4){[3]} && (011) \ar[dd]^{[1]} \\
(100) \ar'[r][rr]^(.2){[3]} \ar[rd]^{[2]}  && (101) \ar[rd]^{[2]} \\
& (110) \ar[rr]^{[3]} && (111) }.
\]
\end{exam}
We will say that a point $z\in\PP^{N_n}$ \textit{lives in a vertex} $v=(v_1,\cdots,v_{n-1})$
if it is in the image of the map $v\to v_f$ given by the composition of all edges $[i]$ such that $v_i=0$. 
It follows that $z$ is $q$-decomposable if and only if it lives in some vertex $v$ of degree $|v|=n-q$.

\begin{lemm}\label{pullback_living}
Let $v\neq v'$ be two vertices of the same degree in the $(n-1)$-dimensional hypercube. Then
there is a unique $2$-dimensional face of the form 
\[
\xymatrix{
\ar[d]_{[j]}w'\ar[r]^{[i]} &v'\ar[d]^{[j]}\\
v\ar[r]^{[i]}&w
}
\]
and if $z$ lives in $w$, $v$ and $v'$, then it also lives in $w'$.
\end{lemm}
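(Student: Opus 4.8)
The plan is to reduce the claim to a single rank-one (flattening) argument, in the spirit of Lemma~\ref{pullback}, after a short analysis of the face. I would first settle the combinatorics. A $2$-dimensional face of the displayed form exists only when $v$ and $v'$ differ in exactly two coordinates $i,j$; since $|v|=|v'|$, at those coordinates the two vertices carry opposite bits, matching the picture (so that $v=w'+e_j$ and $v'=w'+e_i$). This is the situation to be treated, and it forces the remaining corners $w'=v\wedge v'$ and $w=v\vee v'$ (coordinatewise minimum and maximum), making the face unique. I would also record that the hypothesis that $z$ lives in $w$ is automatic: since $v\to v_f$ factors as $v\to w\to v_f$, living in $v$ already forces living in $w$. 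The real content is therefore that living in both $v$ and $v'$ forces living in $w'$.

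Next I would translate everything into the ordered partitions of $\{1,\dots,n\}$ attached to the vertices, the blocks being the maximal groups of tensor factors contracted together (so that the open coordinates are precisely the block boundaries). Under this dictionary, living in a vertex $u$ is exactly the condition $z\in\Sigma_{\mathrm{partition}(u)}$. The point I would exploit is that $v$ and $w'$ differ only in the coordinate $j$: thus $\mathrm{partition}(v)$ is obtained from $\mathrm{partition}(w')$ by merging the unique pair of adjacent blocks $R_1,R_2$ straddling the $j$-th boundary, every other block being common to the two. Consequently, living in $v$ supplies a decomposition $z=\bigl(\bigotimes_{B\ne R_1,R_2}\psi_B\bigr)\otimes\psi_{R_1R_2}$ in which every $w'$-block other than $R_1,R_2$ already appears as a separate factor, so it suffices to prove that $\psi_{R_1R_2}$ is itself a product $\psi_{R_1}\otimes\psi_{R_2}$.

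Splitting $\psi_{R_1R_2}$ is the main step, and here the rank-one idea behind Lemma~\ref{pullback} enters. Because the coordinate $j$ is open in $v'$, living in $v'$ forces $z\in\Sigma_{N_j,N_{n-j}}$, i.e. the flattening of $z$ across the $j$-th cut has rank one. Reading this flattening through the decomposition above, the fixed pure factors $\psi_B$ each lie entirely on one side of the cut and hence do not alter the rank, so the rank of $z$ across the $j$-th cut equals the Schmidt rank of $\psi_{R_1R_2}$ across $R_1\mid R_2$. Rank one then gives $\psi_{R_1R_2}=\psi_{R_1}\otimes\psi_{R_2}$, and substituting back exhibits $z$ as a product along $\mathrm{partition}(w')$, that is, $z$ lives in $w'$. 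The one delicate point I anticipate is precisely this invariance-of-rank bookkeeping, namely verifying that tensoring with the one-sided factors leaves the $j$-th cut rank unchanged so that the bipartite rank-one condition descends cleanly to $\psi_{R_1R_2}$. Once that is in place the argument is uniform, requiring no case distinction according to whether the active coordinates $i$ and $j$ are adjacent.
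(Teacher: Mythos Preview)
Your argument is correct and takes a cleaner route than the paper's. The paper proceeds by a case distinction on whether the active coordinates $i$ and $j$ are adjacent: in the adjacent case it invokes Lemma~\ref{pullback} (the tripartite identity $\Sigma_{k_1,k_2,k_3}=\Sigma_{k_1,N(k_2,k_3)}\cap\Sigma_{N(k_1,k_2),k_3}$) applied to the three contiguous factors involved; in the non-adjacent case it starts from the $w$-decomposition $z=(z_A,z_{12},z_B,z_{34},z_C)$ and observes that the two contractions act on disjoint groups of factors, so the conditions from $v$ and $v'$ refine $z_{12}$ and $z_{34}$ independently. Your approach instead starts from the $v$-decomposition, extracts from ``$z$ lives in $v'$'' only the single bipartite consequence $z\in\Sigma_{N_j,N_{n-j}}$, and uses the Kronecker-rank observation (tensoring with a nonzero vector on one side preserves rank) to descend the rank-one condition to the single block $\psi_{R_1R_2}$. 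This handles both cases uniformly and makes transparent that Lemma~\ref{pullback} is just the adjacent-coordinates instance of your rank argument; the paper's version trades this uniformity for staying in explicit coordinates and reusing Lemma~\ref{pullback} as a black box. Your side remark that living in $w$ is automatic once $z$ lives in $v$ is also a small simplification the paper does not record.
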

\begin{proof}
Since $|v|=|v'|$ and $v\neq v'$, there are $i,j$ such that $v_i=0$, $v_i'=1$, $v_j=1$, $v_j'=0$,
and $v_k=v_k'$ for all $k\neq i,j$.
Let $w$ and $w'$ be the vertices whose components are given by
\[w_k=\mathrm{max} \{v_k,v_k'\}\text{ and }w_k'=\mathrm{min} \{v_k,v_k'\}.\]
This gives the above commutative square.
Assume now that $z$ lives in $w$, $v$ and $v'$. It suffices to consider two cases:

\textit{Case} $j=i+1$. In this case, the above commutative square represents morphisms of the form 
\[
\xymatrix{\ar[d]^{\mathbb{I}_A\times \mathbb{I}_{k_1}\times f_{k_2,k_3}\times \mathbb{I}_B}
A\times \PP^{k_1}\times \PP^{k_2}\times \PP^{k_3}\times B\ar[rrr]^-{\mathbb{I}_A\times  f_{k_1,k_2}\times \mathbb{I}_{k_3}\times \mathbb{I}_B} &&& \ar[d]^{\mathbb{I}_A\times f_{N(k_1,k_2),k_3}\times\mathbb{I}_B}
A\times \PP^{N(k_1,k_2)}\times \PP^{k_3}\times B\\
A\times\PP^{k_1}\times \PP^{N(k_2,k_3)}\times B\ar[rrr]^{\mathbb{I}_A\times  f_{k_1,N(k_2,k_3)}\times \mathbb{I}_B}&&&
A\times \PP^{N(k_1,k_2,k_3)}\times B,
}
\]
where $A$ and $B$ are products of projective spaces. Therefore we can apply Lemma \ref{pullback}
to conclude that $z$ lives in $w'$.

\textit{Case} $j<i+1$. In this case, the above commutative square represents morphisms of the form 
\[
\xymatrix{\ar[d]
A\times \PP^{k_1}\times \PP^{k_2}\times B\times  \PP^{k_3}\times \PP^{k_4}\times C\ar[r] & \ar[d]
A\times \PP^{N(k_1,k_2)}\times B\times \PP^{k_3}\times \PP^{k_4}\times C\\
A\times\PP^{k_1}\times \PP^{k_2}\times B\times \PP^{N(k_3,k_4)}\times C\ar[r]&
A\times \PP^{N(k_1,k_2)}\times B\times \PP^{N(k_3,k_4)}\times C
}.
\]
Since $z$ lives in $w$ we may decompose $z=(z_A, z_{12}, z_B,z_{34},z_C)$.
Since it lives in $v$, we have $z_{12}\in \Sigma_{N(k_1,k_2)}$
and since it lives in $v'$ we have $z_{34}\in \Sigma_{N(k_3,k_4)}$.
It directly follows that $z$ lives in $w'$.
\end{proof}

\begin{theo}[Generalized Decomposability]
Let $n\geq 2$ and $1<q\leq n$ be integers.
A point $z\in \PP^{N_n}$ is $q$-decomposable if and only if 
it is in at least $q-1$ different Segre varieties of the form 
$\Sigma_{N_\ell,N_{n-\ell}}$, with $0\leq \ell\leq n$.
\end{theo}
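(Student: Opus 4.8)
The plan is to argue entirely within the hypercube language of Lemmas~\ref{pullback} and~\ref{pullback_living}, reducing the statement to a combinatorial fact about the set of vertices in which $z$ lives. Observe first that $z$ lives in the vertex $v_f-e_\ell$ (the vertex with a single $0$ in position $\ell$) precisely when $z\in\Sigma_{N_\ell,N_{n-\ell}}$, since the unique edge $v_f-e_\ell\to v_f$ is the bipartite Segre map $f_{N_\ell,N_{n-\ell}}$ whose image is $\Sigma_{N_\ell,N_{n-\ell}}$. Recall also the fact noted just after Lemma~\ref{pullback_living}: $z$ is $q$-decomposable if and only if it lives in some vertex of degree $n-q$. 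Writing $L:=\{v:\ z\text{ lives in }v\}$, the theorem becomes: $L$ contains a vertex of degree $n-q$ if and only if $L$ contains at least $q-1$ of the $n-1$ vertices adjacent to $v_f$.

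I would first record the monotonicity of $L$: if $z$ lives in $v$ and $w\geq v$ coordinatewise, then the defining map $v\to v_f$ factors as $v\to w\to v_f$, so $\mathrm{Im}(v\to v_f)\subseteq\mathrm{Im}(w\to v_f)$ and hence $w\in L$. The forward implication is then immediate: a vertex $v\in L$ of degree $n-q$ has exactly $q-1$ zero-coordinates $j_1,\dots,j_{q-1}$, and since $v\leq v_f-e_{j_i}$ for each of them, monotonicity places all $q-1$ vertices $v_f-e_{j_i}$ in $L$.

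The content is the reverse implication, which is a descent. Suppose $z$ lives in at least $q-1$ of the adjacent vertices and select $q-1$ of them, say $u_i:=v_f-e_{j_i}$ for distinct $j_1,\dots,j_{q-1}$; set $S=\{j_1,\dots,j_{q-1}\}$. For $T\subseteq S$ let $v_T$ be the vertex whose zero-set is exactly $T$; I claim $v_T\in L$ for every $T\subseteq S$, by strong induction on $|T|$. The cases $|T|\leq 1$ are the hypotheses, since $v_\emptyset=v_f$ and $v_{\{j_i\}}=u_i$. For $|T|\geq 2$, choose distinct $a,b\in T$ and apply Lemma~\ref{pullback_living} to the vertices $v_{T\setminus\{b\}}$ and $v_{T\setminus\{a\}}$: they have the same degree and differ exactly in positions $a,b$, their join is $v_{T\setminus\{a,b\}}$ and their meet is $v_T$. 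All three of $v_{T\setminus\{b\}}$, $v_{T\setminus\{a\}}$, $v_{T\setminus\{a,b\}}$ lie in $L$ by the inductive hypothesis, as their zero-sets are strictly smaller than $T$, so the lemma forces $v_T\in L$. Taking $T=S$ yields $v_S\in L$, a vertex of degree $n-1-|S|=n-q$, whence $z$ is $q$-decomposable.

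The only delicate point, and the obstacle I anticipate, is that Lemma~\ref{pullback_living} applies only to two vertices of the \emph{same} degree (equivalently, whose symmetric difference has size two). Building up $v_S$ by absorbing the zeros of $u_1,\dots,u_{q-1}$ one at a time fails, because after the first descent one is left comparing vertices of mismatched degree. The induction above circumvents this by always pairing two equal-size subsets differing by a single swap, whose join already sits in $L$, so every invocation of the lemma stays within its hypotheses. Finally I would remark that the degenerate indices $\ell=0,n$ give $\Sigma_{N_\ell,N_{n-\ell}}=\PP^{N_n}$ and are always trivially satisfied, so they may be discarded, recovering the effective range $1\leq\ell\leq n-1$ of the statement in the body.
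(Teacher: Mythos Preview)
Your proof is correct and takes essentially the same approach as the paper's: both work in the hypercube language, dispose of the forward direction by monotonicity (which the paper simply calls ``trivial''), and for the reverse direction descend via repeated applications of Lemma~\ref{pullback_living} to the vertex whose coordinates are the componentwise minimum of the given degree-$(n-2)$ vertices. Your version makes the recursion explicit as an induction on $|T|$, whereas the paper compresses this into the single phrase ``by recursively applying Lemma~\ref{pullback_living}''.
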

\begin{proof} Using the cubical representation introduced above, it suffices to show that if
 $z$ lives in $q-1$ different vertices $v^1,\cdots,v^{q-1}$ of degree
$|v^i|=n-2$, then $z$ lives in a vertex of degree $(n-q)$.
By recursively applying Lemma \ref{pullback_living} we find that $z$ lives in the vertex $v^*$ of degree $(n-q)$
whose components are
$v^*_j=\mathrm{min}_i \{v_j^i\}.$
This implies that $z$ is $q$-decomposable.
The converse statement is trivial.
\end{proof}

\bibliographystyle{apsrev4-1}
\bibliography{bibliografia}

\end{document}